\newcommand{\cmarkOpt}{\faCheck}
\newcommand{\cmark}{\checkmark}
\newtheorem{corr}{Corollary}
\newtheorem{lem}{Lemma}
\newtheorem{assum}{Assumption}
\newtheorem{defn}{Definition}
\newtheorem{prop}{Proposition}
\newtheorem{thm}{Theorem}
\newtheorem{prob}{Problem}
\newcommand{\ytask}{y_\mathrm{task}}
\newcommand{\dytask}{\dot{y}_\mathrm{task}}
\newcommand{\ktrust}{k_\mathrm{trust}}
\newcommand{\coln}{\mathrm{Col}}
\newcommand{\rown}{\mathrm{Row}}
\newcommand{\mcS}{\mathcal{S}}
\newcommand{\rowop}{\mathcal{R}}
\newcommand{\col}[1]{\rowop{\left(#1\right)}}
\newcommand{\rowsum}{+} %
\newcommand{\StateSpace}{\mathcal{X}} %
\newcommand{\StateSpaceXi}{\rowop\left(T_\mcS\right)} %
\newcommand{\StateSpaceEta}{\rowop\left(\TperpS\right)} %
\newcommand{\SOne}{ \mathcal{P}}
\newcommand{\STwo}{ \mathcal{Q}}
\newcommand{\SsetSubopt}{\SensorSet_\mathrm{subopt}}
\newcommand{\SSubopt}{\mcS_\mathrm{subopt}}
\newcommand{\colCtask}{\col{\Ctask}}
\newcommand{\colTsmalls}{\col{T_{s}}}
\newcommand{\colTsmallsOne}{\col{T_{s_1}}}
\newcommand{\colTsmallsTwo}{\col{T_{s_2}}}
\newcommand{\colTS}{\col{T_{\mcS}}}
\newcommand{\colTsOneOrsTwo}{\col{T_{\{s_1\} \cup \{s_2\}}}}
\newcommand{\colTStask}{\col{T_{\Stask}}}
\newcommand{\colTSOne}{\col{T_{\SOne}}}
\newcommand{\colTSTwo}{\col{T_{\STwo}}}
\newcommand{\colTSOneAndTwo}{\col{T_{\SOne \cap \STwo}}}
\newcommand{\colTSOneOrTwo}{\col{T_{\SOne \cup \STwo}}}
\newcommand{\colTSOrStask}{\col{T_{\mcS \cup \Stask}}}
\newcommand{\rank}{\mathrm{rank}} %
\newcommand{\Stask}{\mcS_{\mathrm{task}}} %
\newcommand{\StaskOne}{\mcS_{\mathrm{task,1}}} %
\newcommand{\StaskTwo}{\mcS_{\mathrm{task,2}}} %
\newcommand{\StaskThree}{\mcS_{\mathrm{task,3}}} %
\newcommand{\StaskFour}{\mcS_{\mathrm{task,1}}} %
\newcommand{\Ctask}{C_{\Stask}} %
\newcommand{\CtaskOne}{C_{\StaskOne}} %
\newcommand{\CtaskTwo}{C_{\StaskTwo}} %
\newcommand{\CtaskThree}{C_{\StaskThree}} %
\newcommand{\CtaskFour}{C_{\StaskFour}} %
\newcommand{\TperpS}{T_\mcS^\perp}
\newcommand{\dimR}[1]{\mathrm{dim}\left(#1\right)}
\newcommand{\SensorSet}{\mathscr{S}}
\newcommand{\PowerSet}{2^\SensorSet}
\newcommand{\Sgreedy}{\mcS^\ast_\mathrm{greedy}}
\newcommand{\SgreedyMinus}{\mcS^-_\mathrm{greedy}}
\newcommand{\situationawarenesstext}{sit\text{-}aware}
\newcommand{\Ssitaware}{{\SensorSet_\mathrm{\situationawarenesstext}}}
\newcommand{\Ssitawarered}{{\SensorSet_{\mathrm{\situationawarenesstext,reduced}}}}
\newcommand{\Sred}{{\SensorSet_\mathrm{reduced}}}
\newcommand{\Strust}{\SensorSet_\mathrm{trust}}
\newcommand{\Cref}{{C_{\mathrm{ref}}}}
\newcommand{\Sopt}{\mcS^\ast}
\let\Gin@viewport@code\Gin@trim\expandafter\Gread@parse@vp#1 \\}
\title{Trust-based user-interface design for human-automation systems
\thanks{This material is based upon work supported by the
    National Science Foundation.  Vinod, Thorpe, Olaniyi,
    and Oishi are supported under Grant Number CMMI-1254990, 
    CNS-1329878, and CMMI-1335038.  Summers
    is supported under Grant Number CNS-1566127 and
    CMMI-1728605 and by the Army Research Office under Grant Number W911NF-17-1-0058. Any opinions, findings, and conclusions or
    recommendations expressed in this material are those of
    the authors and do not necessarily reflect the views of
    the National Science Foundation.\newline 
    \indent Figure~\ref{fig:actual} is licensed from Dept.
    of Energy and Climate Change under
    Creative Commons Attribution-NoDerivs 2.0 Generic (CC
    BY-ND 2.0).  This photo appeared under the title ``Energy
    Minister Michael Fallon visits the National Grid Control
    Centre in Wokingham'' at
\texttt{https://www.flickr.com/photos/deccgovuk/8725424647/}
\texttt{in/photostream/}. 
}}
\author{Abraham~P.~Vinod\thanks{A. Vinod is with Oden Institute of Computational
    Engineering and Sciences, University of Texas at Austin,
    Austin, TX 78712 USA;
    e-mail:\texttt{aby.vinod@gmail.com}. This work was
completed while Vinod was a doctoral student at the
University of New Mexico.} , 
    Adam J. Thorpe$^\mathsection$, 
    Philip~A.~Olaniyi$^\mathsection$, 
    Tyler~H.~Summers\thanks{T. Summers is with Mechanical
    Engineering, University of Texas at Dallas, Richardson,
TX 75080 USA; e-mail: \texttt{tyler.summers@utdallas.edu}} , 
and Meeko~M.~K.~Oishi\thanks{A. Thorpe, P. Olaniyi, and M.
    Oishi (corresponding author) are with Electrical and
    Computer Engineering, University of New Mexico,
    Albuquerque, NM 87131 USA;
e-mail:\texttt{\{ajthor,polaniyi,oishi\}@unm.edu}}%
}
\date{}
\begin{document}
\maketitle

\begin{abstract}
We present a method for dynamics-driven, user-interface design for a human-automation system via sensor selection.  We define the user-interface to be the output of a MIMO LTI system, and formulate the design problem as one of selecting an output matrix from a given set of candidate output matrices.  Sufficient conditions for situation awareness are captured as additional constraints on the selection of the output matrix.  These constraints depend upon the level of trust the human has in the automation.  We show that the resulting user-interface design problem is a combinatorial, set-cardinality minimization problem with set function constraints.  We propose tractable algorithms to compute optimal or sub-optimal solutions with suboptimality bounds. Our approaches exploit monotonicity and submodularity present in the design problem, and rely on constraint programming and submodular maximization. We apply this method to the IEEE 118-bus, to construct correct-by-design interfaces under various operating scenarios. 
\end{abstract}

\keywords{
    User-interface design\and human-automation interaction\and
    observability\and sensor selection\and output synthesis
}

\section{Introduction}

{\em Situation awareness}, the ability to deduce the current state of the system and predict the evolution of the state in the short-term \cite{Endsley1995}, 
is essential for effective human-automation interaction.  
In expensive, high-risk, and safety-critical systems, such as
power grid distribution systems, aircraft and other transportation systems, 
biomedical devices, and nuclear power
generation, the user-interface %
helps the user maintain situation awareness by 
providing critical information about the
system to the user \cite{sheridan1992telerobotics,Dix91}.
Indeed, a lack of situation awareness is known to be a contributing factor to operator error in major grid failures \cite{Panteli2013, Endsley2008}.
A variety of recommendations and guidelines for ``good'' user-interface design have been posited \cite{steinfeld2004interface,PF11,Billings97}.
However, {\em formal tools} for user-interface
design, that explicitly incorporate 
the underlying dynamics, could help
avert potential errors and mishaps, and reduce time consuming and costly
design and testing iterations.

We consider the user-interface to be equivalent to an output
map of the dynamical system, and pose the question of
user-interface design as one of \emph{sensor selection}:
among the sensors that could be the elements of the
interface, we aim to identify a combination which is minimal \cite{Billings97}, yet sufficient for 
situation awareness, and dependent upon the user's trust in the automation. 
We focus solely on the information content, and not on the qualitative aspects of {\em how} that information is provided.  The need for minimal interfaces is particularly evident in large systems (Figure \ref{fig:actual}), for which providing too much 
information can render the interface ineffective because it is overwhelming, 
and providing too little information can result in perceived non-determinism \cite{DH02}.

\begin{figure}
    \centering
    \includegraphics[width=0.5\linewidth]{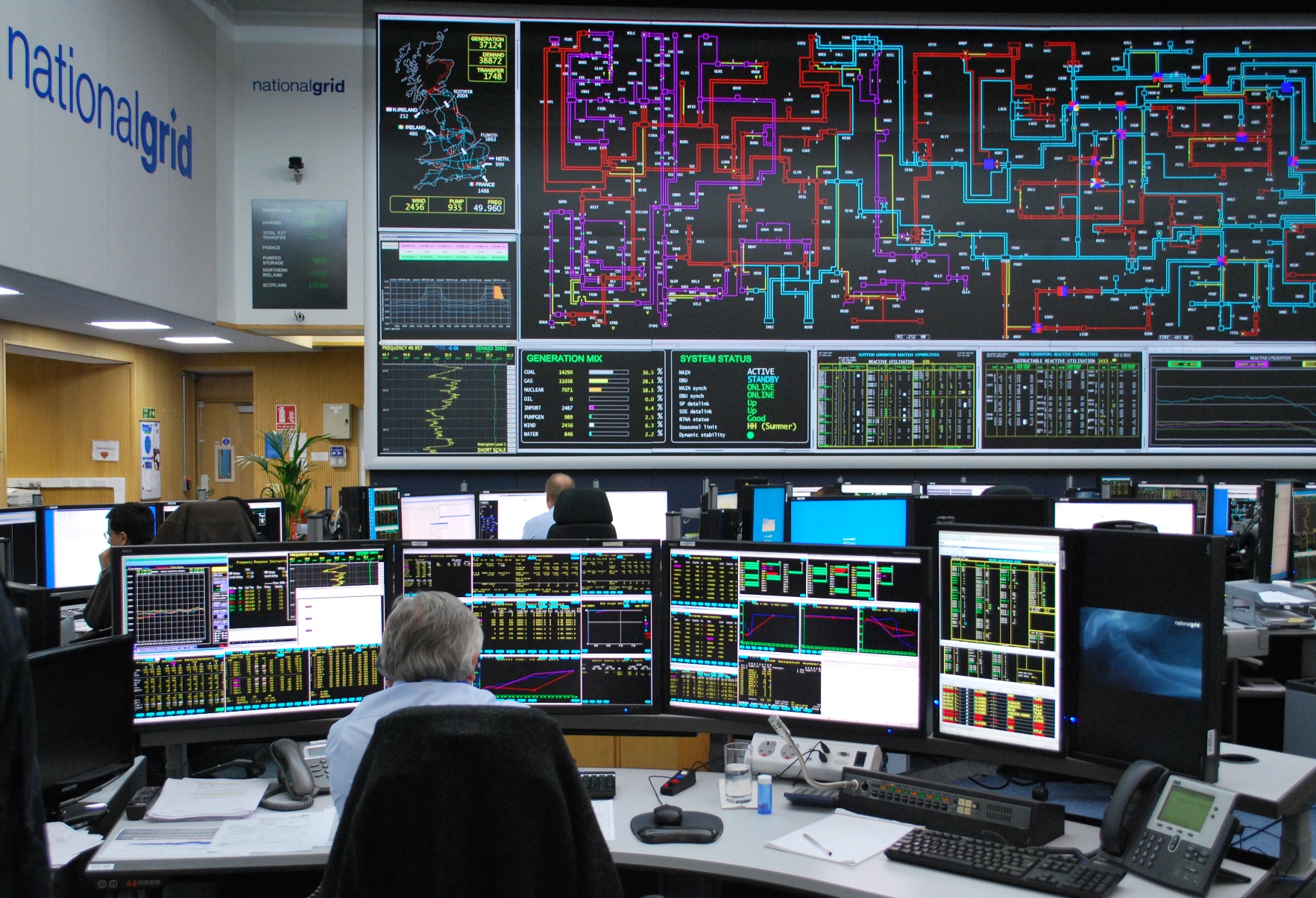}
    \caption{User-interfaces for power grid operators facilitate
        situation awareness, by providing
        information from which the power grid operator can estimate the state 
        and predict its evolution.  The sheer volume of
        information warrants the use of constructive tools
        (as opposed to ad-hoc guidelines) to synthesize the
        information content of the interface. Image licensed
    under CC BY-ND 2.0.}
    \label{fig:actual}
\end{figure}

Sensor selection \cite{VT_RSS_2010, mourikis2006optimal,rowaihy2007survey,qi2014optimal, gou2008generalized} is typically posed as a combinatorial optimization
problem, %
which becomes intractable even for moderate problem
sizes.  While some heuristics, such as convex
relaxation~\cite{JoshiConvex, polyak2013lmi} and 
combinatorial algorithms that avoid a full exhaustive search
\cite{krause2007near, shamaiah2010greedy, summers_2014} have been employed, computational complexity remains a significant challenge.  
For some problem classes (e.g., cardinality-constrained
submodular set function maximization~\cite{krause2007near,
nemhauser_1978}), greedy algorithms and other graph
theoretic approaches can yield provably optimal or
near-optimal results
\cite{clark_submodularity_2017,wolsey1982analysis,
shamaiah2010greedy, krause2007near, nemhauser_1978,
summers_2014, berger2005discrete}.  
Hence we focus heavily in this paper on characterization of the computational aspects of user-interface design via sensor selection.

Other approaches to user-interface analysis and design have focused on related aspects of human-automation interaction.  Model-checking has been used to detect mode confusion 
in discrete
event systems \cite{Rushby14, PF11, BBS13}, and finite-state
machine reduction techniques have been
used to synthesize user-interfaces of minimal cardinality
for discrete-state abstractions of hybrid systems \cite{DH02, OMBDT08}.  
Interfaces have been designed to assure internal and external awareness \cite{Sadigh2017}, 
to facilitate transfer of control authority between the human and the automation, 
and to articulate information related to the role of regret in human decision aids \cite{Wang2018}.  
In \cite{Jain2018_1,Jain2018_2}, the effect of transparency on workload and trust was evaluated, and a feedback scheme developed that alters transparency of the interface.  
Other interface design approaches focus on moderating human input \cite{BLM12,Murphey2016} despite uncertainty, and on mixed-initiative control \cite{Sadrfaridpour2017,Saeidi2017} for human-robot interaction.

Our approach is based on observability conditions that presume
the human is a special type of observer, %
to assess whether the interface provides sufficient information for the
human to accomplish a given task \cite{EO11, oishi_2014, Hammond15}.
Hence in contrast to standard sensor placement problems, additional 
constraints arise to ensure situation awareness, and to capture the effect of the user's trust in the automation.
{\em The main contributions of this paper are: 1) assurances of optimality and suboptimality via submodularity and monotonicity properties, specific to the user-interface design problem, and 2) efficient numerical implementations, that employ constraint programming, greedy heuristics for submodular
maximization, and a novel enumeration framework for large user-interface design problems.}
The algorithmic advances proposed here enable application to problems that would be computationally prohibitive with our preliminary approach \cite{VinodACC2016}.
Further, the model proposed here captures gradated user trust in the automation, a more 
subtle characterization than the simplistic, no trust or full trust, characterization
that was used in \cite{VinodACC2016}.

The paper is organized as follows: Section~\ref{sec:prem}
provides the problem formation.  Section~\ref{sec:UI}
formulates user-interface design as a combinatorial
optimization problem.  Section~\ref{sec:CP} describes a
novel enumeration framework that enables computationally
efficient search for feasible user interfaces.
Section~\ref{sec:app} demonstrates our approach on
user-interface design for a large system, the IEEE 118-bus,
and Section~\ref{sec:conc} provides the conclusions.

\section{Preliminaries and problem statement}
\label{sec:prem}

A finite set $\SensorSet$ has cardinality
$| \SensorSet|$ and power set $\PowerSet$. 
A set function $f: \PowerSet \rightarrow \mathbb{R}$ takes
as input a subset of $\SensorSet$ and returns a real number.
For natural numbers
$a,b\in \mathbb{N}$ with $a\leq b$, we 
define the set $\mathbb{N}_{[a,b]}=\{c\in
\mathbb{N}:a\leq c\leq b\}$. For a matrix $M\in
\mathbb{R}^{p\times q}$, we denote its column rank by
$\rank(M)$, and its column space (range) by $\col{M}$. We define a
matrix whose column space coincides with a subspace $
\mathcal{V}$ as $\mathrm{basis}(\mathcal{V})$. Recall that
$\mathrm{basis}\left(\col{M}\right)$ is not unique.  Given
two vector spaces $ \mathcal{V}_1,\mathcal{V}_2$, their sum
($\mathcal{V}_1\rowsum\mathcal{V}_2=\{v_1+v_2:v_1\in
\mathcal{V}_1,v_2\in \mathcal{V}_2\}$) and their
intersection  are vector spaces~\cite[Pg.
22]{friedbergLinearAlg}.

\begin{figure}
    \centering
    \includegraphics[width=0.5\linewidth]{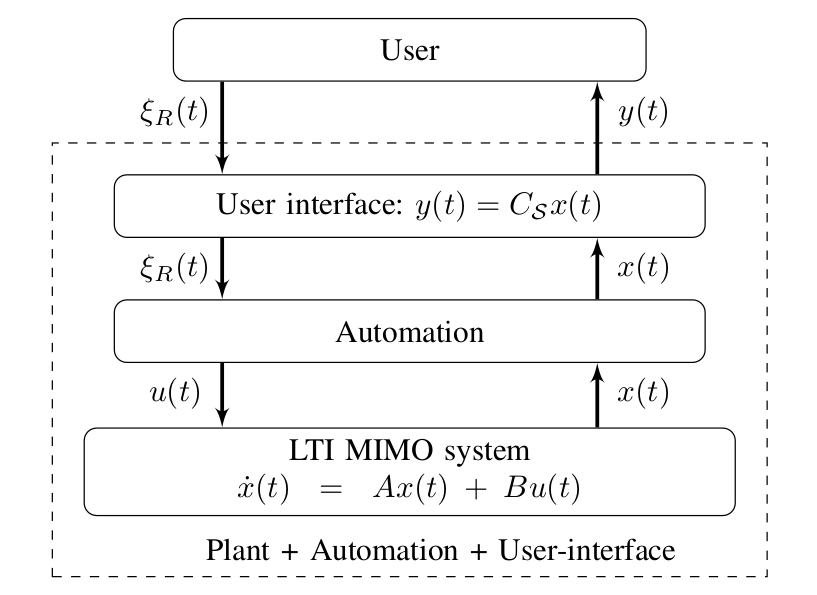} 
    \caption{Human-automation system in which the human
    provides a reference trajectory, and the automation
synthesizes a low-level control to achieve it.}
    \label{fig:manualcontrol}
\end{figure}

Consider a human-automation
system %
(Figure~\ref{fig:manualcontrol}) in which the human provides a reference trajectory  $\xi_R(t) \in \mathbb R^p$, and the automation synthesizes a low-level controller to achieve reference tracking \cite{oishi_2014}.  We presume a MIMO LTI system, 
\begin{subequations}
\begin{align}
    \dot{x}(t) &= A x(t) + B u(t)  \label{eq:sys_dyn}\\
    y(t)&= C_\mcS x(t)\label{eq:sys_output}
\end{align}\label{eq:sys}%
\end{subequations}%
with state $x(t)\in \StateSpace = \mathbb{R}^n$, input
$u(t)\in \mathbb{R}^m$, output $y(t)\in \mathbb{R}^p$, and
known matrices $A\in \mathbb{R}^{n\times n}$ and $B\in
\mathbb{R}^{n\times m}$.  
The user receives
information about the plant via the user-interface. 

\begin{defn}[\textbf{User-interface and sensors}]\label{defn:UIsensor}
    We define the output $y(t)$ as the user-interface of the
    system \eqref{eq:sys}, with the candidate rows of
    $C_\mcS$ referred to as the sensors $s_i \in
    \mathbb{R}^{n}$. 
\end{defn}
By Definition~\ref{defn:UIsensor}, a sensor is a potential
element of the user-interface.  We denote the set of all
sensors as
$\SensorSet = \{s_1, \cdots, s_{| \SensorSet |}\}$ for a
finite ${| \SensorSet |}\in \mathbb{N}$.  For any sensor
combination $\mcS\in\PowerSet$, the output matrix $C_\mcS$
is a matrix whose rows consist of the elements $s_i\in\mcS$,
and the total number of outputs associated with $C_\mcS$ is
$p=| \mcS|$.

\begin{defn}[\textbf{Task}]\label{defn:task}
    A task is characterized by the tuple $(\ell,\Ctask)$,
    with a known task matrix $\Ctask \in \mathbb{R}^{{|
    {\Stask}|}\times n}$ associated with $\Stask\in
    \PowerSet$, and a known, possibly nonlinear, function
    $\ell: \mathbb{R}^{|\Stask|} \rightarrow \mathbb{R}$.
    The task $(\ell,\Ctask)$ is a specification of the form
    always $x(t)\in \mathcal{F}(t)$ or eventually $x(t)\in
    \mathcal{F}(t)$, for $ \mathcal{F}(t) =
    \{x(t):\ell\left(\Ctask x(t)\right) \geq
    0\}$.
\end{defn}
The task is defined in terms of safety or
liveness specifications, i.e., a desirable phenomena that
should always or eventually
happen~\cite{oishi_2014,baier2008principles}.  The task may
also be interpreted as imposing a specification on the
output $\ytask(t)=\Ctask x(t)$.  

\emph{Illustrative example}:
Consider an LTI model of a jerk-controlled robot constrained
to move in a line, which is tasked with maintaining a
velocity above a minimum speed $v_\mathrm{min}$.
The robot has a camera mounted with independent dynamics.
The position dynamics (3D) and camera heading dynamics (1D)
result in
\begin{align}
    A&= \left[ {\begin{array}{cccc} 
                        0  &  1 & 0 & 0 \\ 
                        0  &  0 & 1 & 0 \\ 
                        0  &  0 & 0 & 0 \\
                        0  &  0 & 0 & 0 \\
                   \end{array} } \right], \: %
    B = \left[ {\begin{array}{cc} 
                        0 & 0 \\ 
                        0 & 0 \\
                        1 & 0 \\
                        0 & 1 \\
                \end{array} } \right]\label{eq:sys_chain}
\end{align}
with states that include position, velocity, acceleration,
and the camera heading.  We consider a suite of sensors
based on measurements of each state, i.e.,
$ \SensorSet = \{s_p, s_v, s_a, s_h\}$ with
$s_p=[1\ 0\ 0\ 0]$, $s_v=[0\ 1\ 0\ 0]$, $s_a=[0\ 0\ 1\ 0]$,
and $s_h=[0\ 0\ 0\ 1]$. 
The task is defined by
$(\ell,\Ctask)$, with $\Stask = \{s_v\}$, $\Ctask =
s_v$, $\ell(z)=z-v_\mathrm{min}$, and $
\mathcal{F}(t)=\{x(t): \Ctask x(t)\geq v_\mathrm{min}\}$.
\hfill $\qedsymbol$
\vspace*{5pt}

For a given task $(\ell, \Ctask)$, we seek to design a user-interface 
$C_\mathcal{S}$ that satisfies, in order of
importance:
\begin{enumerate}
    \item[C1)] Situation awareness,
    \item[C2)] Compatibility with the user's trust in the
        automation,~and
    \item[C3)] Conciseness.
\end{enumerate}
These properties represent human factors that are key for effective human-automation interaction, and will be described in detail in Section \ref{sec:UI}.  Briefly, constraint C1 takes into account the limitations of the
human operator and the complexity of the task.  
Constraint C2 requires that more information is provided to the
user when the user's trust in the automation is low, and vice versa. Constraint
C3 prevents high cognitive load
associated with excessive data.  

We embed these properties as constraints in the sensor selection problem for user-interface design:
\begin{subequations}
    \begin{align}
        \underset{\mcS\in\PowerSet}{\mathrm{minimize}}&\quad
        | \mcS| &\mbox{(concise)}\label{eq:prob_defn_cost}\\
        \mathrm{subject\ to}&\quad
        \mcS\in\Ssitaware\label{eq:prob_defn_constraint2}
        &\mbox{(situation awareness)}\\
                            &\quad
        \mcS\in\Strust\label{eq:prob_defn_constraint3}
        &\mbox{(trust)}
    \end{align}\label{prob:prob_defn}%
\end{subequations}%
in which \eqref{eq:prob_defn_cost} arises from C3, 
\eqref{eq:prob_defn_constraint2} arises from C2, and 
\eqref{eq:prob_defn_constraint3} arises from C1.

\begin{prob}\label{prob:main}
    Given a task $(\ell,\Ctask)$ and a human-automation
    system \eqref{eq:sys}, find a succinct characterization
    of the constraint for situation awareness, $\Ssitaware$, and of the
    constraint for trust compatibility, $\Strust$.
\end{prob}

\begin{prob}
    Construct tractable combinatorial optimization
    algorithms to solve \eqref{prob:prob_defn}, with
    guarantees of optimality or suboptimality, as appropriate.
\end{prob}

Because combinatorial optimization problems are typically hard to
solve due to their large feasible solution space, solving (\ref{prob:prob_defn}) directly is a challenging endeavour.  Problem 1 provides structure that we can exploit
to address Problem 2, so that (\ref{prob:prob_defn}) can be addressed through tractable reformulation.

\section{User-interface design as sensor selection}
\label{sec:UI}

\subsection{Situation awareness via observability}
\label{sub:sit_aware}

Situation awareness consists of three elements: perception, comprehension, and projection, more formally defined in \cite{Endsley1995} as ``perception of the elements in an environment within a volume of time and space, the comprehension of their meaning, and the projection of their status in the near future.''
As in \cite{EO11,Hammond15,Eskandari2016}, we interpret these three elements respectively as
as the ability to reconstruct those elements of the state
that are relevant for the task at hand, the ability to understand the output and its time derivatives, and the ability to reconstruct those elements of the state derivative relevant to the task at hand.  Unlike standard notions of observability, we do not require that the entire state can be reconstructed \cite{EO11,oishi_2014}.  

Although qualitative aspects of user-interface design are key for effective human-automation interaction \cite{Dix91,Oulasvirta2020}, we focus solely on quantitative aspects, and presume that information content will be presented in a human-centric manner.  In essence, 
we presume that a user with situation awareness is able to make sense of the presented information for the purpose of decision-making and control \cite{Endsley1995,wickens2008situation}.  

\begin{assum}[\textbf{Situation awareness}]\label{assum:SitAware}
    For a given a user-interface, 
    constructed from elements $\mcS\in\PowerSet$, 
    a user with situation awareness can reconstruct the output of
    the system, $y(t)= C_{\mcS} x(t)$, the unforced
    higher derivatives of the output, and their linear
    combinations.
\end{assum}

As in \cite{EO11,Hammond15,VinodACC2016}, we %
employ input-output linearization 
to capture the user's
interaction with the system (\ref{eq:sys}).  
We presume the user provides a reference trajectory $\xi_R(t)$ that is smooth.  
Given an output matrix $C_\mcS$ with 
$\mcS=\{s_1,s_2,\ldots,s_{| \mcS |}\}\in \PowerSet$, we
construct
a similarity
transform $P_\mcS \in \mathbb{R}^{n\times n}$,
\begin{equation}
\left[\begin{array}{c}
        \xi(t)\\
        \hline\eta(t)
      \end{array}\right]
   =P_{\mcS}x(t)
   =\left[\begin{array}{c}
           T_\mcS\\
           \hline\TperpS
          \end{array}\right]x(t)\label{eq:MIMO_transform}
\end{equation}
that results in observable states $\xi(t)\in \StateSpaceXi$ and unobservable states $\eta(t)\in
\StateSpaceEta$.  The linear transformation $T_\mcS$ is
defined using $T_{s_i}$ for some $s_i\in\mcS$, as
\begin{align}
    T_{s_i}&={\left[s_i\ {(s_i^\top A)}^\top\ {(s_i^\top
    A^2)}^\top\ \ldots\ {(s_i^\top
    A^{\gamma(s_i)-1})}^\top\right]}^\top\label{eq:Tsi},\\
    T_\mcS &= \mathrm{basis}\left({\col{\left[{T_{s_1}^\top\
    T_{s_2}^\top\ \ldots\ T_{s_{| \mcS
    |}}^\top}\right]}^\top}\right) \label{eq:TS},
\end{align}
where $\gamma: \SensorSet \rightarrow \mathbb{N}_{[1,n]}$ is
the relative degree of the MISO system with the single
output $s_i^\top x(t)$.  By \eqref{eq:Tsi}, $\col{T_{\mcS}}$
is the state subspace spanned by the outputs characterized
by $y(t)=C_{\mcS}x(t)$ and their unforced  higher
derivatives.

\begin{assum}\label{assum:AutoWellDesigned}\textbf{(Correctly designed automation)}
    The automation generates $u(t)$ such that that $(\xi(t),
    \dot{\xi}(t))$ tracks the reference trajectory
    $(\xi_R(t), \dot{\xi}_R(t))$.
\end{assum}

The implications of Assumptions 1 and 2 are twofold: 1) the user can reconstruct $\xi(t)$
and predict its evolution (because $\dot{\xi}(t)$ can be reconstructed), and 
2) the user
delegates control of the internal dynamics $\eta(t)$ to
the automation.

To tractably 
enumerate $\Ssitaware$, we propose the 
\emph{user information index}, a set function %
that measures the dimension of the state subspace the
user can reconstruct and predict from the information
presented in the user interface.  

\begin{table}
    \scriptsize
    \centering
    \setlength\tabcolsep{1pt}
    \newcommand{\propWidth}{2.2cm}                
    \newcommand{\oneSenseWidth}{0.6cm}                
    \newcommand{\twoSenseWidth}{1.0cm}
    \newcommand{\threeSenseWidth}{0.9cm}
    \begin{tabular}{|m{\propWidth}|m{\oneSenseWidth}|m{\oneSenseWidth}|m{\oneSenseWidth}|m{\oneSenseWidth}|m{\twoSenseWidth}|m{\twoSenseWidth}|m{\twoSenseWidth}|m{\twoSenseWidth}|m{\twoSenseWidth}|m{\twoSenseWidth}|m{\threeSenseWidth}|m{\threeSenseWidth}|m{\threeSenseWidth}|m{\threeSenseWidth}|m{\twoSenseWidth}|}
    \hline									%
    $\mcS$ & $\{s_p\}$ & $\{s_v\}$ & $\{s_a\}$ & $\{s_h\}$ & $\{s_p,s_v\}$ &
    $\{s_p,s_a\}$ & $\{s_p,s_h\}$ & $\{s_v,s_a\}$ & $\{s_v,s_h\}$ &
    $\{s_a,s_h\}$ &  \begin{minipage}{\twoSenseWidth}
        $\{s_p,s_v,$\\\hspace*{4pt} $s_a\}$
    \end{minipage} & \begin{minipage}{\twoSenseWidth}
        $\{s_p,s_v,$\\\hspace*{4pt} $s_h\}$
    \end{minipage} & \begin{minipage}{\twoSenseWidth}
        $\{s_p,s_a,$\\\hspace*{4pt} $s_h\}$
    \end{minipage} & \begin{minipage}{\twoSenseWidth}
        $\{s_v,s_a,$\\\hspace*{4pt} $s_h\}$
    \end{minipage} & \begin{minipage}{\twoSenseWidth}
        $\{s_p,s_v,$\\\hspace*{2pt} $s_a,s_h\}$
    \end{minipage}\\ \hline
    $\Gamma(\mcS)$              & 3 & 2& 1& 1& 3& 3& 4& 2& 3& 2& 3& 4& 4& 3& 4\\\hline
    $\Gamma(\mcS\cup\Stask)$    & 3 & 2& 2& 3& 3& 3& 4& 2& 3& 3& 3& 4& 4& 3& 4 \\\hline 
    $\Ssitaware$                   & \cmark& \cmark&&&\cmark&\cmark&\cmark&\cmark&\cmark&&  \cmark&\cmark&\cmark&\cmark&\cmark\\\hline
    $2^\Sred$                  & \cmark& \cmark& \cmark & &\cmark&\cmark& &\cmark& & &\cmark & & & & \\\hline
    $\Ssitawarered$                & \cmark& \cmark& & &\cmark&\cmark& &\cmark& & &\cmark & & & & \\\hline
    $\Strust,\ktrust=1$         & \cmarkOpt & \cmarkOpt & \cmark & \cmark & \cmark &
    \cmark & \cmark & \cmark & \cmark & \cmark & \cmark & \cmark & \cmark &
    \cmark & \cmark \\\hline
    $\Strust,\ktrust=2$         & \cmarkOpt & \cmarkOpt &  &  & \cmark &
    \cmark & \cmark & \cmark & \cmark & \cmark & \cmark & \cmark & \cmark &
    \cmark & \cmark \\\hline
    $\Strust,\ktrust=3$         & \cmarkOpt & & & & \cmark &
    \cmark & \cmark &  & \cmark & & \cmark & \cmark & \cmark &
    \cmark & \cmark \\\hline    
    $\Strust,\ktrust=4$         & & & & &  & & \cmarkOpt &  &  &  &  &\cmark   &  \cmark & & \cmark \\\hline    
    \end{tabular}
    \caption{Application of various definitions to the illustrative example
    given in Section~\ref{sec:prem}. Here, $\SensorSet=\{s_p,s_v,s_a,s_h\}$,
$\Stask=\{s_v\}$, and $\Sred=\{s_p,s_v,s_a\}$.  Interfaces that satisfy both situation awareness and trust constraints for a given level of user-trust in the automation are feasible for \eqref{prob:prob_defn}; interfaces that are optimal for a given trust level are indicated in bold. } \label{tab:table_example}
\end{table}

\begin{figure}[ht]
    \centering
    \includegraphics[width=0.6\linewidth]{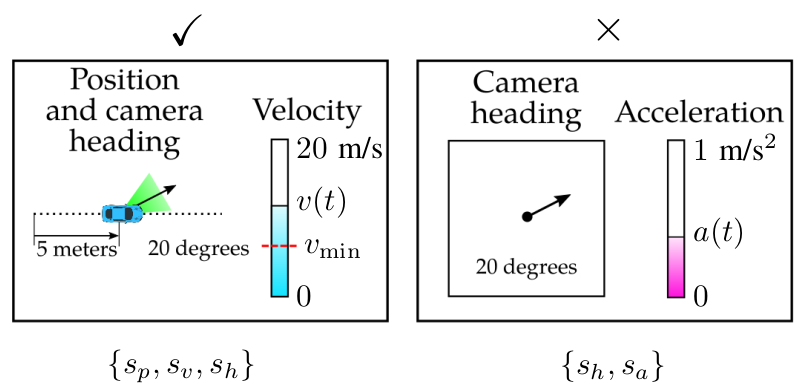} 
    \caption{User-interfaces for the illustrative
    example in Section~\ref{sec:prem}. The
user-interface on the left enables situation awareness for
the task of keeping $v(t) \geq v_\mathrm{min}$, and is appropriate for all levels of user trust.  In contrast, the
user-interface on the right does not enable situation awareness, and meets the trust requirement only for high levels of trust.}
    \label{fig:table_example_UIs}
\end{figure}

\begin{defn}[\textbf{User information index}]\label{defn:Uii}
    The \emph{user information index} is the set
    function $\Gamma:\PowerSet \rightarrow
    \mathbb{N}_{[1,n]}$,
    \begin{align}
        \Gamma(\mcS)&=\dimR{\col{T_\mcS}}=\rank( T_\mcS). 
        \label{eq:gamma_def}
    \end{align}
\end{defn}
The user information index $\Gamma(\mcS)$ characterizes the
dimensions of $\xi(t)$ and $\eta(t)$, since
$\xi(t)\in\mathbb{R}^{\Gamma(\mcS)}$ and
$\eta(t)\in\mathbb{R}^{n-\Gamma(\mcS)}$. 
Table~\ref{tab:table_example} shows $\Gamma(\mcS)$ for the illustrative example.

\begin{prop}[\textbf{Sufficient information for task completion}]\label{prop:SitAware}
    If $\colCtask\subseteq\colTS$, %
    then the user-interface $C_{\mcS}$ provides sufficient
    information to complete the task $(\ell,\Ctask)$. 
\end{prop}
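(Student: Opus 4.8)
The plan is to show that the subspace containment $\colCtask \subseteq \colTS$ is exactly the algebraic condition guaranteeing that a situation-aware user can reconstruct the task output $\ytask(t) = \Ctask x(t)$, and hence can evaluate the task specification $\ell(\Ctask x(t)) \ge 0$. The argument is short and rests almost entirely on Assumption~\ref{assum:SitAware} together with the interpretation of $\colTS$.

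First I would characterize the set of state functionals a situation-aware user can reconstruct. By construction \eqref{eq:Tsi}--\eqref{eq:TS}, the rows of $T_\mcS$ span the same subspace as the functionals $\{s_i^\top A^j : s_i \in \mcS,\ 0 \le j \le \gamma(s_i)-1\}$, i.e., the entries of $C_\mcS x(t)$ together with their unforced higher derivatives. Invoking Assumption~\ref{assum:SitAware}, the user can reconstruct each of these quantities and, crucially, every linear combination of them; equivalently, the user can reconstruct $w^\top x(t)$ for every $w \in \colTS$.

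Next I would translate the task. Completing $(\ell,\Ctask)$ amounts to deciding, at each $t$, whether $x(t) \in \mathcal{F}(t) = \{x : \ell(\Ctask x) \ge 0\}$; since $\ell$ is a known (possibly nonlinear) map, this reduces to knowing the value of its argument $\ytask(t) = \Ctask x(t)$, i.e., to reconstructing each of the $|\Stask|$ task functionals given by the rows of $\Ctask$. I would then combine the two steps: the hypothesis $\colCtask \subseteq \colTS$ says precisely that every task functional lies in the span of the reconstructible functionals, hence is a linear combination of them. Therefore each component of $\ytask(t)$ is reconstructible, the user recovers $\Ctask x(t)$ in full, evaluates $\ell(\Ctask x(t))$, and determines whether the specification holds, so that $C_\mcS$ supplies sufficient information to complete the task.

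I expect the main obstacle to be bookkeeping rather than depth. One must apply $\col{\cdot}$ consistently, as the span of the relevant row/functional vectors inside the state space $\mathbb{R}^n$, so that the containment $\colCtask \subseteq \colTS$ and the reconstructibility guaranteed by Assumption~\ref{assum:SitAware} refer to the same subspace. The only conceptual point needing care is that linear reconstructibility of the \emph{vector} $\ytask(t)$ suffices even though $\ell$ may be nonlinear; this holds because $\ell$ is fixed a priori and is applied to the reconstructed value, so no information about $x(t)$ beyond $\Ctask x(t)$ is required to evaluate the specification.
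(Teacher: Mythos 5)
Your core algebraic step coincides with the paper's: the containment $\colCtask\subseteq\colTS$ lets you write each row of $\Ctask$ as a linear combination of the rows of $T_\mcS$, so that $\ytask(t)=\Ctask x(t)$ is a linear function of quantities the user can reconstruct under Assumption~\ref{assum:SitAware}. That half of the argument is correct and is exactly how the paper begins.

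The gap is in your reduction of ``completing the task'' to deciding, at each instant, whether $x(t)\in\mathcal{F}(t)$. In this paper the task (Definition~\ref{defn:task}) is a \emph{temporal} specification --- always or eventually $x(t)\in\mathcal{F}(t)$ --- and the user is not a passive monitor but the agent generating the reference trajectory $\xi_R(t)$ (Figure~\ref{fig:manualcontrol}). Completing such a specification therefore requires the projection element of situation awareness: the user must also predict the short-term evolution of the task output, i.e., estimate $\dytask(t)$, which is precisely the additional claim in the paper's proof. This second half does not follow from Assumption~\ref{assum:SitAware} alone. Writing a row of $\Ctask$ as $\sum_{i,j}\alpha_{ij}\, s_i^\top A^j$ with $0\leq j\leq\gamma(s_i)-1$, its time derivative along \eqref{eq:sys_dyn} contains the terms $s_i^\top A^{\gamma(s_i)-1}\left(Ax(t)+Bu(t)\right)$; by the definition of the relative degree $\gamma(s_i)$, the input coefficient $s_i^\top A^{\gamma(s_i)-1}B$ is generally nonzero, and $s_i^\top A^{\gamma(s_i)}$ need not lie in the span of the rows of $T_\mcS$. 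So the derivative of the task output is \emph{forced} and partly unobservable, and its prediction is only possible because Assumption~\ref{assum:AutoWellDesigned} guarantees that $(\xi(t),\dot{\xi}(t))$ tracks the user-generated reference $(\xi_R(t),\dot{\xi}_R(t))$, which the user knows. Your proposal never invokes Assumption~\ref{assum:AutoWellDesigned}, and your closing remark that ``no information about $x(t)$ beyond $\Ctask x(t)$ is required'' is true only for instantaneous evaluation of $\ell$, not for the prediction needed to steer the system so that the always/eventually specification is met.
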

\begin{proof}
    If $\colCtask\subseteq\colTS$, we can express the task
    output $\ytask(t)=\Ctask x(t)\in\colCtask$ as a linear
    combination of the observable state $\xi(t)\in\colTS$.  Hence under
    Assumptions~\ref{assum:SitAware}
    and~\ref{assum:AutoWellDesigned}, the user can estimate
    $\ytask(t)$ and $\dytask(t)$ from the user-interface
    output $y(t)=C_{\mcS}x(t)$. 
\end{proof}

Proposition~\ref{prop:SitAware} states that a user-interface
enables situation awareness of the task at hand, provided $\ytask(t)$ is
contained in the observable subspace $\colTS$. 
However, the conditions in Proposition 1 are not amenable to tractable computation. Hence, we reframe Proposition \ref{prop:SitAware} in terms of the user-information index.

\begin{lem} %
    \label{lem:prop_TS_gamma}
   Given any $\SOne,\STwo\in\PowerSet$,
   {\renewcommand{\theenumi}{\alph{enumi}}
   \begin{enumerate}
       \item $\SOne\subseteq \STwo$ implies $\colTSOne\subseteq \colTSTwo$ and $\Gamma(\SOne)\leq\Gamma(\STwo)$,\label{lem:prop_TS_gamma_row_TS_monotone}
       \item $\Gamma(\SOne\cup \STwo) =\Gamma(\SOne) + \Gamma(\STwo) - \dimR{\colTSOne\cap \colTSTwo}$,\label{lem:prop_TS_gamma_cup_row}
       \item $\Gamma(\SOne\cap \STwo)\leq\dimR{\colTSOne\cap
           \colTSTwo}$, and\label{lem:prop_TS_gamma_cap}
       \item $\Gamma(\SOne\cup \STwo) =\Gamma(\SOne)$ if and
           only if $\colTSOneOrTwo=\colTSOne$.\label{lem:prop_TS_gamma_TS_equality_converse}
    \end{enumerate}}
\end{lem}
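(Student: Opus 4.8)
The plan is to reduce all four claims to a single structural identity: the observable subspace attached to a sensor set is built up sensor by sensor, so that a union of index sets corresponds to a sum of subspaces. I would first record
\[
    \colTS = \sum_{s_i \in \mcS} \col{T_{s_i}}, \qquad \text{hence} \qquad \colTSOneOrTwo = \colTSOne + \colTSTwo.
\]
The justification comes directly from \eqref{eq:Tsi}--\eqref{eq:TS}: $T_\mcS$ is a basis for the span of the stacked rows $\{s_i^\top A^k : s_i \in \mcS,\ 0 \leq k \leq \gamma(s_i)-1\}$, the relative degree $\gamma(s_i)$ is a property of the single sensor $s_i$ alone and does not shift when sensors are pooled, and $\col{\mathrm{basis}(\mathcal{V})} = \mathcal{V}$ makes the non-uniqueness of the basis irrelevant. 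Therefore the generating set of $\colTSOneOrTwo$ is exactly the union of the generating sets of $\colTSOne$ and $\colTSTwo$, which yields the displayed sum-of-subspaces identity.

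Given this identity, parts (a) and (b) are short. For (a), when $\SOne \subseteq \STwo$ the generators of $\colTSOne$ form a subset of those of $\colTSTwo$, so $\colTSOne \subseteq \colTSTwo$; monotonicity of dimension under subspace inclusion then gives $\Gamma(\SOne) = \dimR{\colTSOne} \leq \dimR{\colTSTwo} = \Gamma(\STwo)$. For (b), I apply the Grassmann dimension formula $\dimR{U + V} = \dimR{U} + \dimR{V} - \dimR{U \cap V}$ with $U = \colTSOne$ and $V = \colTSTwo$, substitute the identity $\colTSOneOrTwo = \colTSOne + \colTSTwo$ on the left, and read off the result using $\Gamma(\mcS) = \rank(T_\mcS) = \dimR{\colTS}$.

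For (c), since $\SOne \cap \STwo$ is a subset of both $\SOne$ and $\STwo$, part (a) gives $\colTSOneAndTwo \subseteq \colTSOne$ and $\colTSOneAndTwo \subseteq \colTSTwo$, hence $\colTSOneAndTwo \subseteq \colTSOne \cap \colTSTwo$; taking dimensions yields the claimed inequality. I would remark that this is genuinely an inequality and not an equality, because the observable subspace generated by the common sensors can be strictly smaller than the intersection of the two full observable subspaces. For (d), the direction $\colTSOneOrTwo = \colTSOne \Rightarrow \Gamma(\SOne \cup \STwo) = \Gamma(\SOne)$ is immediate on taking dimensions; for the converse, part (a) applied to $\SOne \subseteq \SOne \cup \STwo$ gives $\colTSOne \subseteq \colTSOneOrTwo$, and when the two subspaces additionally have equal finite dimension, the standard fact that a subspace of a finite-dimensional space with matching dimension fills it forces $\colTSOne = \colTSOneOrTwo$.

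The only step demanding real care is the structural identity itself: one must confirm that $\colTS$ depends on $\mcS$ only through the per-sensor spanning vectors, and in particular that the relative degrees $\gamma(s_i)$ are intrinsic to each sensor so that union of sensor sets maps cleanly onto sum of subspaces. Once that is in place, everything else is routine finite-dimensional linear algebra, namely dimension monotonicity, the Grassmann formula, and the equal-dimension-implies-equality argument.
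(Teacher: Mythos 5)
Your proof is correct and takes essentially the same approach as the paper: both rest on the identity $\colTSOneOrTwo=\colTSOne\rowsum\colTSTwo$, from which (a) follows by generator inclusion, (b) by the dimension formula for sums of subspaces, (c) by the containment $\colTSOneAndTwo\subseteq\colTSOne\cap\colTSTwo$, and (d) by combining the inclusion from (a) with equality of finite dimensions. The only difference is one of detail: you spell out why the union identity holds (the relative degrees are intrinsic to each sensor) and make the equal-dimension argument in (d) explicit, steps the paper covers by terse citation of the definitions and standard linear-algebra facts.
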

The proof of Lemma \ref{lem:prop_TS_gamma} is provided in Appendix \ref{app:lem_prop_TS_gamma}.

\newcommand{\proofLemPropTSGamma}{
    We have \ref{lem:prop_TS_gamma_row_TS_monotone}) from \eqref{eq:Tsi}, \eqref{eq:TS}, and \eqref{eq:gamma_def}.

    We have \ref{lem:prop_TS_gamma_cup_row}) by
    \eqref{eq:TS_union} and~\cite[Sec. 1.6, Ex.
    29]{friedbergLinearAlg},
    \begin{align}
        \colTSOneOrTwo&=\colTSOne\rowsum \colTSTwo.\label{eq:TS_union}
    \end{align}

    We have \ref{lem:prop_TS_gamma_cap}) from
    \eqref{eq:TS_intersect}~\cite[Sec. 1.4, Ex.
    15]{friedbergLinearAlg} and
    \ref{lem:prop_TS_gamma_row_TS_monotone}),
    \begin{align}
        \colTSOneAndTwo \subseteq \colTSOne \cap
        \colTSTwo\label{eq:TS_intersect}.
    \end{align}

    We have \ref{lem:prop_TS_gamma_TS_equality_converse}) 
    from \ref{lem:prop_TS_gamma_row_TS_monotone}) and
    \ref{lem:prop_TS_gamma_cup_row}).
}

\begin{prop}[\textbf{Situation awareness constraint via
user information index}]\label{prop:Ssitaware_defn}
For every $\mcS\in\Ssitaware$, defined as
\begin{align}
    \Ssitaware&\triangleq\left\{{\mcS\in \PowerSet:
    \Gamma(\mcS)=\Gamma(\mcS\cup\Stask)}\right\}
    \label{eq:Ssitaware_third},
\end{align}
the user interface $C_\mcS$
provides sufficient information to complete the task.
\end{prop}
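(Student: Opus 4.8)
The plan is to reduce the claim to the sufficient condition of Proposition~\ref{prop:SitAware}, namely $\colCtask\subseteq\colTS$, and then to establish that condition for every $\mcS\in\Ssitaware$ through a short chain of subspace containments. Concretely, I would show that membership in $\Ssitaware$ forces $\colTStask\subseteq\colTS$, and that the task subspace always satisfies $\colCtask\subseteq\colTStask$; composing these two inclusions yields $\colCtask\subseteq\colTS$, at which point Proposition~\ref{prop:SitAware} closes the argument.

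For the first inclusion, I would invoke Lemma~\ref{lem:prop_TS_gamma} with $\SOne=\mcS$ and $\STwo=\Stask$. By the definition of $\Ssitaware$ in \eqref{eq:Ssitaware_third}, the hypothesis $\mcS\in\Ssitaware$ is exactly $\Gamma(\mcS)=\Gamma(\mcS\cup\Stask)$, so part~\ref{lem:prop_TS_gamma_TS_equality_converse}) of the lemma yields the equality $\colTSOrStask=\colTS$. Separately, since $\Stask\subseteq\mcS\cup\Stask$, the monotonicity in part~\ref{lem:prop_TS_gamma_row_TS_monotone}) gives $\colTStask\subseteq\colTSOrStask$. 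Combining these two facts produces $\colTStask\subseteq\colTS$.

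The remaining inclusion $\colCtask\subseteq\colTStask$ is the step I expect to require the most care, since it is where the construction \eqref{eq:Tsi}--\eqref{eq:TS} and the transpose/basis conventions must be unwound precisely. The key observation is that $\Ctask=C_{\Stask}$ is generated by the sensors $s_i\in\Stask$, and that each such sensor $s_i$ appears as the leading ($A^0$) block of $T_{s_i}$ in \eqref{eq:Tsi}, which is present because $\gamma(s_i)\geq 1$. Hence every generator of $\colCtask$ is among the generators assembled into $T_{\Stask}$ via \eqref{eq:TS}, giving $\colCtask\subseteq\colTStask$. Chaining this with the previous paragraph produces $\colCtask\subseteq\colTS$, and Proposition~\ref{prop:SitAware} then certifies that $C_{\mcS}$ provides sufficient information to complete the task.
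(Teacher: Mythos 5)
Your proof is correct and takes essentially the same route as the paper's: both reduce the claim to Proposition~\ref{prop:SitAware} by combining the inclusion $\colCtask\subseteq\colTStask$ (immediate from \eqref{eq:Tsi}--\eqref{eq:TS}) with the chain $\colTStask\subseteq\colTSOrStask=\colTS$, where the equality comes from Lemma~\ref{lem:prop_TS_gamma}\ref{lem:prop_TS_gamma_TS_equality_converse}) applied to the defining condition $\Gamma(\mcS)=\Gamma(\mcS\cup\Stask)$ and the containment from Lemma~\ref{lem:prop_TS_gamma}\ref{lem:prop_TS_gamma_row_TS_monotone}). The only difference is cosmetic: you unpack why $\colCtask\subseteq\colTStask$ holds (each sensor $s_i\in\Stask$ appears as the leading block of $T_{s_i}$ since $\gamma(s_i)\geq 1$), whereas the paper simply cites \eqref{eq:TS} for this step.
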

\begin{proof}
    By \eqref{eq:TS}, $\colCtask\subseteq\colTStask$.  By Lemma~\ref{lem:prop_TS_gamma}\ref{lem:prop_TS_gamma_TS_equality_converse},
     we have
        $\Ssitaware=\left\{{\mcS\in \PowerSet: \colTS=\colTSOrStask}\right\}$. %
    Further,
    $\colTStask\subseteq\colTSOrStask$ for any $\mcS\in\Ssitaware$ by
    Lemma~\ref{lem:prop_TS_gamma}\ref{lem:prop_TS_gamma_row_TS_monotone}.
    Hence we have 
    $\colTStask\subseteq\colTSOrStask=\colTS$ for any $\mcS\in\Ssitaware$.  Thus,
    $\colCtask\subseteq\colTS$. 
    Applying Proposition~\ref{prop:SitAware} completes the proof.
\end{proof}

Table~\ref{tab:table_example} shows $\Ssitaware$ for the
illustrative example, and two possible interfaces are shown in Figure \ref{fig:table_example_UIs}. As expected,
$\{s_h\}\not\in\Ssitaware$ since the heading measurement
$s_h$ alone provides no information about velocity (the
task), due to the decoupled dynamics \eqref{eq:sys_chain}.
Furthermore, $\{s_a\}\not\in\Ssitaware$ since reconstructing
velocity from acceleration measurements requires
integration.
Thus, all
sensor combinations in $\PowerSet\setminus\{\{s_a\}
,\{s_h\},\{s_a,s_h\}\}$ provide sufficient information for situation  awareness, enabling task completion.

\begin{algorithm}[t]
    \caption{Efficient enumeration of $\Ssitaware$ via characterization of $\Ssitawarered$
}\label{algo:Enum}
    \begin{algorithmic}[1]
        \Require~Set of all sensors $\SensorSet$, sensors that define the task
        $\Stask$, the user information index
        $\Gamma(\cdot)$ 
        \Ensure~Sensor combinations that enable situation awareness
         $\Ssitaware$, and a reduced set $\Ssitawarered$
        \State $\Ssitaware\gets \emptyset,\quad\Ssitawarered\gets \emptyset$
        \State Compute $\Ssitawarered$ using \eqref{eq:soneset}\label{line:soneset}
        \For{$\SOne\in\Ssitawarered$}
                \State $\Ssitaware \gets \Ssitaware \cup \{\SOne\times
                2^{\SensorSet\setminus\SOne}\}$\label{line:Senum}
        \EndFor
        \State\Return $(\Ssitaware,\Ssitawarered)$
  \end{algorithmic}
\end{algorithm}

Since enumerating $\PowerSet$ to compute $\Ssitaware$ 
is computationally expensive for large $|\SensorSet|$, 
we propose Algorithm~\ref{algo:Enum} for a tractable
enumeration of $\Ssitaware$. 
We construct a reduced set of admissible sensors $\Sred$,
\begin{align}
    \Sred&\triangleq\{s\in \SensorSet:
            \Gamma(s)+\Gamma(\Stask)>\Gamma(s\cup\Stask)\}\label{eq:Sred}\\
         &=\{s\in \SensorSet:
            \dimR{\colTsmalls\cap \colTStask}>0\},\label{eq:Sred_dim}
\end{align}
(where \eqref{eq:Sred_dim} follows from \eqref{eq:Sred} and
Lemma~\ref{lem:prop_TS_gamma}\ref{lem:prop_TS_gamma_cup_row}), 
to construct an easily computable subset of $\Ssitaware$,
    \begin{align}
        \Ssitawarered&=
        \{\SOne\in2^{\Sred} |
        \Gamma(\SOne\cup\Stask)=\Gamma(\SOne)\}.
        \label{eq:soneset}
    \end{align}
The set $\Ssitawarered$ is ``minimal,'' in that removing any sensor from the sensor combinations in $\Ssitawarered$ will violate the situation awareness constraint \eqref{eq:Ssitaware_third}. 
Additional elements are appended to $\Ssitawarered$ (line~\ref{line:Senum}), so that 
Algorithm~\ref{algo:Enum} provides an exact enumeration of
the members of $\Ssitaware$.  Algorithm~\ref{algo:Enum} is computationally tractable, since enumeration is done over $2^\Sred$, and 
$|2^\Sred| << |\PowerSet|$.

\begin{thm}[\textbf{Correctness of
    Algorithm~\ref{algo:Enum}}]\label{thm:Ssitaware_decompose}
    The set $\Ssitaware$ can be constructed as the union of two sets,
    \begin{align}
        \Ssitaware &= \left\{\mcS\in\SensorSet\middle|
            \begin{array}{c}
                \SOne=\mcS\cap\Sred,\\
                \Gamma(\SOne\cup\Stask)=\Gamma(\SOne)
            \end{array}\right\},\label{eq:Ssitaware_decompose}\\
                   &=\{\SOne\times
                   2^{\SensorSet\setminus\SOne}|\SOne\in\Ssitawarered\}\label{eq:Ssitaware_decompose_algo}.
    \end{align}
\end{thm}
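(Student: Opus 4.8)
The plan is to collapse both set descriptions onto a single subspace‑containment statement and then isolate the one geometric fact that does the real work. First I would record the characterizations already in hand: by the proof of Proposition~\ref{prop:Ssitaware_defn} (which combines Lemma~\ref{lem:prop_TS_gamma}\ref{lem:prop_TS_gamma_TS_equality_converse} with Lemma~\ref{lem:prop_TS_gamma}\ref{lem:prop_TS_gamma_row_TS_monotone}), membership $\mcS\in\Ssitaware$ is equivalent to $\colTStask\subseteq\colTS$; by the same reasoning, for any $\SOne\subseteq\Sred$ the defining condition $\Gamma(\SOne\cup\Stask)=\Gamma(\SOne)$ of \eqref{eq:soneset} is equivalent to $\colTStask\subseteq\col{T_{\SOne}}$. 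Taking $\SOne:=\mcS\cap\Sred$, the content of \eqref{eq:Ssitaware_decompose} reduces to the single equivalence
\[
    \colTStask\subseteq\colTS \iff \colTStask\subseteq\col{T_{\mcS\cap\Sred}}.
\]

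Next I would dispatch the easy implication. Since $\mcS\cap\Sred\subseteq\mcS$, monotonicity of the observable subspace (Lemma~\ref{lem:prop_TS_gamma}\ref{lem:prop_TS_gamma_row_TS_monotone}) gives $\col{T_{\mcS\cap\Sred}}\subseteq\colTS$, so $\colTStask\subseteq\col{T_{\mcS\cap\Sred}}$ immediately forces $\colTStask\subseteq\colTS$. This direction already yields one inclusion of \eqref{eq:Ssitaware_decompose_algo}: for any $\SOne\in\Ssitawarered$ and any $\mcS\in\SOne\times2^{\SensorSet\setminus\SOne}$ we have $\SOne\subseteq\mcS$ with $\colTStask\subseteq\col{T_{\SOne}}\subseteq\colTS$, hence $\mcS\in\Ssitaware$, and no further structure is needed.

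The converse implication is the crux, and is where I expect the main obstacle. Writing $\colTS=\col{T_{\mcS\cap\Sred}}\rowsum\col{T_{\mcS\setminus\Sred}}$, which follows from the stacked construction \eqref{eq:TS} of $T_\mcS$, I would try to show that none of the task subspace is carried by the discarded sensors, i.e.
\[
    \col{T_{\mcS\setminus\Sred}}\cap\colTStask=\{0\}.
\]
Granting this, any vector of $\colTStask\subseteq\colTS$ splits into a component in $\col{T_{\mcS\cap\Sred}}$ and one in $\col{T_{\mcS\setminus\Sred}}$, and triviality of the latter intersection pins the whole vector into $\col{T_{\mcS\cap\Sred}}$, giving $\colTStask\subseteq\col{T_{\mcS\cap\Sred}}$. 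The delicate point is that the pointwise definition \eqref{eq:Sred_dim}, which only guarantees $\dimR{\col{T_s}\cap\colTStask}=0$ for each individual $s\in\mcS\setminus\Sred$, does not by itself force the \emph{sum} $\col{T_{\mcS\setminus\Sred}}=\sum_{s\in\mcS\setminus\Sred}\col{T_s}$ to meet $\colTStask$ trivially. I would try to close this gap by induction on $|\mcS\setminus\Sred|$, adjoining one discarded sensor at a time and invoking Lemma~\ref{lem:prop_TS_gamma}\ref{lem:prop_TS_gamma_cup_row} together with the observability geometry of the subspaces $\col{T_s}$ built in \eqref{eq:Tsi}; this inductive step, and the structural property of the $\col{T_s}$ that underwrites it, is the part I expect to require the most care.

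Finally, I would assemble the two set identities. The boxed equivalence shows $\mcS\in\Ssitaware$ iff $\mcS\cap\Sred\in\Ssitawarered$, which is exactly \eqref{eq:Ssitaware_decompose}. For \eqref{eq:Ssitaware_decompose_algo}, the inclusion ``$\supseteq$'' was settled above, and for ``$\subseteq$'' I take $\mcS\in\Ssitaware$, set $\SOne:=\mcS\cap\Sred\in\Ssitawarered$, and note $\mcS\setminus\SOne\subseteq\SensorSet\setminus\SOne$, so $\mcS\in\SOne\times2^{\SensorSet\setminus\SOne}$. This simultaneously certifies Algorithm~\ref{algo:Enum}, whose line~\ref{line:soneset} computes $\Ssitawarered$ and whose line~\ref{line:Senum} appends precisely the products $\SOne\times2^{\SensorSet\setminus\SOne}$.
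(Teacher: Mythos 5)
Your reduction of both identities to the single equivalence $\colTStask\subseteq\colTS \Leftrightarrow \colTStask\subseteq\col{T_{\mcS\cap\Sred}}$ is correct, and your forward direction via monotonicity is fine; structurally this matches the paper's own proof, which sets $\SOne=\mcS\cap\Sred$, $\STwo=\mcS\setminus\SOne$ and rests everything on its Claim~1, namely $\dimR{\colTSTwo\cap\colTStask}=0$ --- exactly the statement you isolate as the crux. The gap in your proposal is that you never prove this claim: you offer to ``try'' an induction over $\mcS\setminus\Sred$, resting on an unspecified structural property of the subspaces $\col{T_{s}}$, and you yourself note that the pointwise condition \eqref{eq:Sred_dim} on individual sensors does not force the sum $\sum_{s\in\mcS\setminus\Sred}\col{T_{s}}$ to meet $\colTStask$ trivially. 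As written, the proposal is therefore incomplete at its only hard step.

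You should know that this gap cannot be closed: the structural property you hoped for does not exist, and the theorem is false without additional hypotheses. Take $n=2$, $A=I_2$, $B=I_2$, so every sensor has relative degree one and $T_{s}=s$; let $\SensorSet=\{s_1,s_2,s_t\}$ with $s_1=[1\ \ 0]$, $s_2=[0\ \ 1]$, $s_t=[1\ \ 1]$, and $\Stask=\{s_t\}$. Then $\dimR{\col{T_{s_1}}\cap\colTStask}=\dimR{\col{T_{s_2}}\cap\colTStask}=0$, so $\Sred=\{s_t\}$ and $\Ssitawarered=\{\{s_t\}\}$; yet $\mcS=\{s_1,s_2\}$ satisfies $\Gamma(\mcS)=\Gamma(\mcS\cup\Stask)=2$, hence $\mcS\in\Ssitaware$ by \eqref{eq:Ssitaware_third}, while $\SOne=\mcS\cap\Sred=\emptyset$ gives $\Gamma(\SOne\cup\Stask)=1\neq 0=\Gamma(\SOne)$, so $\mcS$ is excluded from the right-hand side of \eqref{eq:Ssitaware_decompose} and is never produced by Algorithm~\ref{algo:Enum}. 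The paper's own proof fails at precisely the point you flagged: its Claim~1 asserts $\Gamma(\{s_1\}\cup\{s_2\}\cup\Stask)=\Gamma(\{s_1\}\cup\{s_2\})+\Gamma(\Stask)$ ``by applying Lemma~\ref{lem:prop_TS_gamma}\ref{lem:prop_TS_gamma_cup_row} twice,'' but an honest application leaves the term $\dimR{\colTsOneOrsTwo\cap\colTStask}$, which is exactly what must be shown to vanish; eliminating it as in \eqref{eq:Gamma_S_Stask} amounts to assuming a three-subspace inclusion--exclusion identity for dimensions, which is false in general (in the example above it predicts rank $3$ where the true rank is $2$). So your instinct about where the difficulty lies was exactly right; the correct conclusion is not that a cleverer induction is needed, but that \eqref{eq:Ssitaware_decompose} requires an extra assumption (e.g., that $\colTStask$ intersects $\sum_{s\in\SensorSet\setminus\Sred}\col{T_{s}}$ trivially, as happens in the paper's coordinate-decoupled examples), and that only the inclusion $\supseteq$ in \eqref{eq:Ssitaware_decompose_algo} --- soundness, but not completeness, of Algorithm~\ref{algo:Enum} --- holds unconditionally.
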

\begin{proof}
    We show that $\Gamma(\mcS\cup\Stask)-\Gamma(\mcS) =
    \Gamma(\SOne\cup\Stask) - \Gamma(\SOne)$, which implies
    that $\Gamma(\mcS\cup\Stask)-\Gamma(\mcS)=0$ if and only
    if $\Gamma(\SOne\cup\Stask)-\Gamma(\SOne)=0$. This
    implies \eqref{eq:Ssitaware_decompose} by
    Proposition~\ref{prop:Ssitaware_defn}, and
    \eqref{eq:Ssitaware_decompose_algo} follows from
    \eqref{eq:Ssitaware_decompose}. The complete proof is in
    Appendix~\ref{app:thm_ssitware}.
\end{proof}
\newcommand{\proofThmSsitaware}{
    We have to show that
    \begin{align}
        \Ssitaware &= \left\{\mcS\in\SensorSet\middle|
            \begin{array}{c}
                \SOne=\mcS\cap\Sred,\\
                \Gamma(\SOne\cup\Stask)=\Gamma(\SOne)
            \end{array}\right\},\label{eq:Ssitaware_decompose_rep}.
    \end{align}

    We show that $\Gamma(\mcS\cup\Stask)-\Gamma(\mcS) =
    \Gamma(\SOne\cup\Stask) - \Gamma(\SOne)$, which implies
    that $\Gamma(\mcS\cup\Stask)-\Gamma(\mcS)=0$ if and only
    if $\Gamma(\SOne\cup\Stask)-\Gamma(\SOne)=0$. This
    implies \eqref{eq:Ssitaware_decompose_rep} by
    Proposition~\ref{prop:Ssitaware_defn}.
    
    For any $\mcS\in\PowerSet$, we use \eqref{eq:Sred} and
    \eqref{eq:Ssitaware_decompose_rep} to define
    \begin{align}
        \SOne=\mcS\cap\Sred\in 2^\Sred\mbox{ and
        }\STwo=\mcS\setminus\SOne.
    \end{align}

    \emph{Proof for $\Gamma(\mcS\cup\Stask)-\Gamma(\mcS) =
    \Gamma(\SOne\cup\Stask) - \Gamma(\SOne)$}: We will
    use the claims:\\
    1) ${\dimR{\colTSTwo\cap \colTStask}=0}$, and\\
    2) $\dimR{\colTSOne\cap \colTSTwo\cap\colTStask}=0$.\\
    On applying Lemma~\ref{lem:prop_TS_gamma}\ref{lem:prop_TS_gamma_cup_row} twice,
    \begin{align}
        \Gamma(\mcS\cup\Stask)
        &=\Gamma(\SOne\cup\STwo\cup\Stask) \nonumber \\
        &=\Gamma(\SOne) + \Gamma(\STwo) + \Gamma(\Stask)
        \nonumber \\
        &\ - \dimR{\colTSOne\cap \colTSTwo}  \nonumber \\
        &\ - \dimR{\colTSOne\cap \colTStask} \nonumber \\
        &\ - \dimR{\colTSTwo\cap \colTStask} \nonumber \\
        &\ + \dimR{\colTSOne\cap \colTSTwo\cap\colTStask}
        \label{eq:Gamma_S_Stask}.
    \end{align}
    By our claims above, the last two terms in
    \eqref{eq:Gamma_S_Stask} is zero.  By adding and
    subtracting an additional $\Gamma(\SOne)$ to
    \eqref{eq:Gamma_S_Stask}, we have
    $\Gamma(\mcS\cup\Stask)=\Gamma(\SOne\cup\STwo) +
    \Gamma(\SOne\cup\Stask) - \Gamma(\SOne)$.  Since
    $\mcS=\SOne\cup\STwo$, we have
    $\Gamma(\mcS\cup\Stask)-\Gamma(\mcS) =
    \Gamma(\SOne\cup\Stask)-\Gamma(\SOne)$.

    \emph{Proof of claim 1)}: By \eqref{eq:Sred_dim}, for
    any $s\in\SensorSet\setminus\Sred$,
    $\dimR{\colTsmalls\cap \colTStask}=0$. Therefore, for
    any $s_1,s_2\in \SensorSet\setminus \Sred$,
    $\dimR{\colTsmallsOne\cap
    \colTStask}=\dimR{\colTsmallsTwo\cap
\colTStask}=\dimR{\colTsmallsOne\cap\colTsmallsTwo\cap
\colTStask}=0$. By applying
Lemma~\ref{lem:prop_TS_gamma}\ref{lem:prop_TS_gamma_cup_row}
twice, we have $\Gamma(\{s_1\} \cup \{s_2\} \cup
\Stask)=\Gamma(\{s_1\} \cup \{s_2\}) + \Gamma(\Stask)$.
    This also implies that $\dimR{\colTsOneOrsTwo\cap
    \colTStask}=0$. Using similar arguments inductively, we
    conclude that ${\dimR{\colTSTwo\cap \colTStask}=0}$
    since $\STwo\subseteq2^{\SensorSet\setminus \Sred}$.

    \emph{Proof of claim 2)}: Clearly, $\colTSOne\cap
    \colTSTwo\cap\colTStask$ is a subset of $
    \colTSTwo\cap\colTStask$, which implies that
    $\dimR{\colTSOne\cap\colTSTwo\cap \colTStask}$ is
    smaller than $\dimR{\colTSTwo\cap \colTStask}$, by
    definition. Since  ${\dimR{\colTSTwo\cap \colTStask}=0}$
    (shown in claim 1), we have $\dimR{\colTSOne\cap\colTSTwo\cap
    \colTStask}=0$.
}

Table~\ref{tab:table_example} shows $\Ssitawarered$ for the
illustrative example, with %
$| \Ssitawarered| =
6$ and  $|\Ssitaware| = 12$.  
For this problem, $2^\Sred$ has only 7 elements, while $\PowerSet$ has 15.
The computational savings become far more dramatic
for larger problems, as illustrated %
in Section~\ref{sec:app}.

\begin{lem}\label{lem:Stask_feas}
    $\Stask$ is a subset of $\Sred$, and $\Stask$ is a
    member of $\Ssitawarered$ and $\Ssitaware$.
\end{lem}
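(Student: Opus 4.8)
The plan is to establish the three assertions separately: (i) $\Stask\subseteq\Sred$, (ii) $\Stask\in\Ssitaware$, and (iii) $\Stask\in\Ssitawarered$. I expect (ii) to be immediate and (iii) to reduce to (i), so essentially all of the content lives in the containment (i).

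For (i), I would fix an arbitrary sensor $s\in\Stask$ and verify the defining condition of $\Sred$ in the dimension form \eqref{eq:Sred_dim}, i.e.\ $\dimR{\col{T_s}\cap\col{T_{\Stask}}}>0$. Since $\{s\}\subseteq\Stask$, the monotonicity in Lemma~\ref{lem:prop_TS_gamma}\ref{lem:prop_TS_gamma_row_TS_monotone} yields $\col{T_s}\subseteq\col{T_{\Stask}}$, so the intersection collapses to $\col{T_s}\cap\col{T_{\Stask}}=\col{T_s}$, whose dimension is exactly $\Gamma(\{s\})=\rank(T_s)$. It then remains only to argue $\rank(T_s)\geq 1$: because $\gamma(s)\geq 1$ as a relative degree in $\mathbb{N}_{[1,n]}$, the matrix $T_s$ built in \eqref{eq:Tsi} always contains the nonzero row $s$, whence $\Gamma(\{s\})\geq 1>0$ and $s\in\Sred$. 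As $s$ was arbitrary, $\Stask\subseteq\Sred$.

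For (ii), I would simply observe the idempotence $\Stask\cup\Stask=\Stask$, so $\Gamma(\Stask)=\Gamma(\Stask\cup\Stask)$ holds trivially; hence $\Stask$ satisfies the defining equality of \eqref{eq:Ssitaware_third} and $\Stask\in\Ssitaware$. For (iii), the set $\Ssitawarered$ in \eqref{eq:soneset} demands two things of a candidate $\SOne=\Stask$: membership $\Stask\in 2^{\Sred}$, which is precisely (i), and the equality $\Gamma(\Stask\cup\Stask)=\Gamma(\Stask)$, which is the same idempotence used in (ii). Both hold, so $\Stask\in\Ssitawarered$.

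The only genuine obstacle is the positivity $\Gamma(\{s\})>0$ underlying (i), and even this is mild: it rests on sensors being nonzero rows together with $\gamma(s)\geq 1$, guaranteeing that $T_s$ carries at least the row $s$. (Implicit throughout is that $\Stask$ is a nonempty task, so that $\Gamma(\Stask)$ is well defined; the containment $\Stask\subseteq\Sred$ is vacuously true otherwise.) Everything else is a direct consequence of $\Stask\cup\Stask=\Stask$ and the monotonicity of $\col{\cdot}$ from Lemma~\ref{lem:prop_TS_gamma}\ref{lem:prop_TS_gamma_row_TS_monotone}.
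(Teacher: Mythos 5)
Your proposal is correct and follows essentially the same route as the paper: both verify $\Stask\subseteq\Sred$ sensor-by-sensor using the definition of $\Sred$ (you use the dimension form \eqref{eq:Sred_dim}, the paper the equivalent form \eqref{eq:Sred}), and both obtain membership in $\Ssitawarered$ and $\Ssitaware$ from the idempotence $\Gamma(\Stask\cup\Stask)=\Gamma(\Stask)$. If anything, yours is slightly more careful, since the paper's strict inequality $\Gamma(\Stask)<\Gamma(s)+\Gamma(\Stask)$ silently assumes $\Gamma(s)>0$, which you justify explicitly from the structure of $T_s$ in \eqref{eq:Tsi}.
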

    \newcommand{\proofLemStaskFeas}{
    By \eqref{eq:Sred},
    $\Gamma(s\cup\Stask)=\Gamma(\Stask)<\Gamma(s)+\Gamma(\Stask)$
    for each $s\in\Stask$. Thus, $\Stask\in\Sred$.

    By construction,
    $\Gamma(\Stask\cup\Stask)=\Gamma(\Stask)$. Thus,
    $\Stask\in\Ssitawarered$.
   } 
Lemma~\ref{lem:Stask_feas} describes the intuitive
observation that constructing a user-interface using only
the sensors that describe the task should also be sufficient
to complete the task. The proof of Lemma~\ref{lem:Stask_feas} is given in Appendix~\ref{app:lem_Stask_feas}.

\subsection{User trust in the automation}

\begin{figure}[t]
    \centering
    \includegraphics[width=0.5\linewidth]{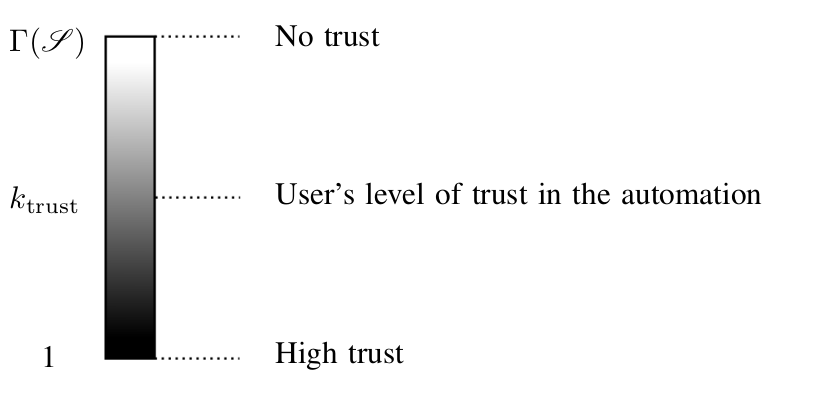} 
    \caption{ The parameter $\ktrust$ indicates the level of
    trust the user has in the automation, with high values
corresponding to low trust, and vice versa.  }
    \label{fig:ktrust}
\end{figure}

User trust in the
automation depends on many factors, including the
expertise of the user, the performance and reliability of the automation,
and the difficulty of the task.  While some dimensions of trust may be static (i.e., dispositional trust), other dimensions may be highly dynamic (i.e., situational or learned trust) \cite{Hoff15,Jain2019}. 
Both low and high levels of trust in the automation are known to be problematic, as they are related to disuse of the automation due to under-reliance, and misuse due to over-reliance, respectively \cite{ParasuramanRiley97}. 

The main principle driving the trust constraint \eqref{eq:prob_defn_constraint2} is that 
the information presented to the user should to be responsive to, and appropriate for, the user's current level of trust in the automation \cite{LeeSee04}.  We focus on the challenges associated with low levels of trust, although extensions to overtrust may be possible.  We presume that additional information would be helpful when the user's trust in the automation is relatively low, but that when the user's trust is relatively high, additional information is not warranted, and may actually be detrimental, if it is overwhelming to the user \cite{Billings97,PSW08}.  
Mathematically, we account for this phenomenon by constraining the user information index by the user's level of trust in the automation.  

\begin{defn}[\textbf{Trust constraint}]\label{defn:Strust}
For a given level of trust in the automation, described by $\ktrust\in \mathbb{N}_{[1,\Gamma(\SensorSet)]}$, we define the set of sensors that are compatible with trust level $\ktrust$ as those whose user information index is above $\ktrust$:
\begin{align}
    \Strust&=\{\mcS\in\PowerSet: \Gamma(\mcS)\geq \ktrust\}\label{eq:Strust}.
\end{align}
\end{defn}

The trust level $\ktrust$ could correspond to a variety of trust metrics, depending on the problem at hand \cite{LewisFTA2018}.  Although considerable variability exists amongst questionnaire-based trust metrics \cite{SATI, HTC, ED}, many seek a summative assessment of trust.  For example, in the SHAPE Automation Trust Index instrument, the `overall amount of trust in the total' system, which solicits trust as a percentage, would be most relevant to our framework \cite{SATI}.  A quantized, affine transformation from the SATI scale, ranging from $0\%$ (no trust) to 100\% (full trust), to our trust level scale, ranging from $\Gamma(\SensorSet)$ (low trust) to 1 (high trust), respectively, would map the SATI `overall trust' to our trust level $\ktrust$, resulting in a static value for a given user.  A similar transformation could be applied to several recent efforts in dynamic trust sensing via behavioral \cite{Sadrfaridpour2017,setter2016trust} and psychophysiological data \cite{Jain2019} (which feature either real-valued, bounded trust variables, i.e., $T(t) \in [0, 1]$ for some trust value $T(t)$, or discrete-valued trust variables, i.e., `low', `medium', `high'), which would allow $\ktrust$ to vary over time.

Because our system model (Figure \ref{fig:manualcontrol}) presumes that the user dictates high level reference tracking, and the automation carries out low-level control, Assumption \ref{assum:AutoWellDesigned} in effect implies that the 
the user
delegates the control of the unobservable state $\eta(t)\in
\mathbb{R}^{n-\Gamma(\mcS)}$ to the automation.  Hence by imposing
a lower bound on the user information index $\Gamma(\cdot)$
in \eqref{eq:Strust}, we impose an upper bound on the
dimension of the unobservable states.  In essence, this bound 
ensures that the unobservable state space doesn't become so large 
that it causes further decrease in trust. 

For example, in off-nominal operation (i.e., scenarios in which the user
may not trust the automation), high values of $\ktrust$
ensure that unobservable state is low dimensional, and the
user retains a large degree of control. On the other hand,
in nominal operation, low values of
$\ktrust$ allow the dimension of the unobservable states to increase, potentially reducing cognitive workload as 
the user delegates control over these variables to the automation.

Table~\ref{tab:table_example} shows $\Strust$ for the
illustrative example %
under various levels of trust. 
We see that
$\Strust=\PowerSet$ when $\ktrust=1$, meaning that all possible interfaces satisfy the trust constraint when the user's trust level is high.  With higher $\ktrust$ (i.e., lower trust level), 
number of sensor combinations that need to be considered
for the user-interface design drastically reduces. For $\ktrust=4$, only four user-interfaces are feasible; 
the observable state is zero-dimensional for these user-interfaces.

\subsection{Dynamics-driven user-interface design as
tractable, combinatorial optimization problems}
\label{sub:UID}

With the situation awareness constraint
\eqref{eq:Ssitaware_third} and trust constraint
\eqref{eq:Strust} established, we reformulate
\eqref{prob:prob_defn} as the combinatorial optimization
problem,
\begin{subequations}
    \begin{align}
        \underset{\mcS\in\PowerSet}{\mathrm{minimize}}&\quad |\mcS| \label{eq:UI_orig_cost}\\
        \mathrm{subject\ to}&\quad
        \Gamma(\mcS)=\Gamma(\mcS\cup\Stask)\label{eq:UI_orig_sitaware}\\
                            &\quad
        \Gamma(\mcS)\geq\ktrust\label{eq:UI_orig_trust}
    \end{align}\label{prob:UI_orig}%
\end{subequations}%
This problem
is well-posed, since $\SensorSet$ is a feasible solution:
$\Gamma(\SensorSet)=\Gamma(\SensorSet\cup\Stask)$, and
$\Gamma(\SensorSet)\geq \ktrust$, by definition. In other
words, the user-interface constructed using all the sensors in $\SensorSet$ is always a feasible solution to \eqref{prob:UI_orig}, irrespective of the task $\Stask$ and the value of $\ktrust$. 

However, solving
\eqref{prob:UI_orig} directly is hard, owing to the
potentially large number of sensor combinations in
consideration $\PowerSet$. We propose different tractable
methods to solve \eqref{prob:UI_orig} using the properties
of $\Gamma(\cdot)$.
First, using Theorem~\ref{thm:Ssitaware_decompose}, we reformulate \eqref{prob:UI_orig}
into \eqref{prob:UI} without introducing any approximation,
\begin{subequations}
    \begin{align}
        \underset{\mcS\in\PowerSet,\SOne\in2^\Sred}{\mathrm{minimize}}&\quad |\mcS| \label{eq:UI_cost}\\
        \mathrm{subject\ to}\hspace*{1.1em}&\quad \SOne\in\Ssitawarered\label{eq:UI_PQ}\\
                            &\quad \SOne=\mcS\cap\Sred\label{eq:UI_sitaware}\\
                            &\quad \Gamma(\mcS)\geq\ktrust\label{eq:UI_trust}
    \end{align}\label{prob:UI}%
\end{subequations}%
We denote the optimal solution of \eqref{prob:UI} as
$\mcS^\ast$ and $\SOne^\ast$.

Next, we investigate submodularity and monotonicity of
$\Gamma(\cdot)$, 
since 
these properties enable greedy heuristics for computing efficient,
near-optimal solutions (see Appendix \ref{app:sub_max}). %
We refer the reader
to~\cite{summers_2014,
clark_submodularity_2017,fujishige2005submodular,Lovasz1983,krause2014submodular}
for more details.

\begin{defn}[\textbf{Submodularity}]
A set function $f(\cdot)$ is submodular if for all sets
$\SOne, \STwo\in \PowerSet$, 
\begin{align}
    f(\SOne) + f(\STwo) &\geq f(\SOne \cup \STwo) + f(\SOne
    \cap \STwo).\label{eq:subm_defn}
\end{align}
\end{defn}
\begin{defn}[\textbf{Monotone increasing}]
A set function $f(\cdot)$ is monotone increasing if for all
sets $\SOne, \STwo\in \PowerSet$, 
\begin{align}
    \SOne \subseteq \STwo & \Rightarrow f(\SOne) \leq f(\STwo).\label{eq:monotone_defn}
\end{align}
\end{defn}

Submodular functions demonstrate diminishing returns, i.e.,
adding an element to a smaller set results in a higher gain
as compared to adding it to a larger set. Monotone increasing functions
preserve the inclusion ordering in $\PowerSet$.

\begin{prop} %
    The user information index $\Gamma(\cdot)$ is a
    submodular monotone increasing
    function.\label{prop:Gamma_subm}
\end{prop}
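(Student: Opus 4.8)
The plan is to obtain both properties as direct consequences of Lemma~\ref{lem:prop_TS_gamma}, since all of the underlying subspace accounting has already been carried out there. Monotonicity is immediate: part~\ref{lem:prop_TS_gamma_row_TS_monotone} of Lemma~\ref{lem:prop_TS_gamma} asserts exactly that $\SOne\subseteq\STwo$ implies $\Gamma(\SOne)\leq\Gamma(\STwo)$, which is the defining condition \eqref{eq:monotone_defn}. So that half requires no additional argument.

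For submodularity, I would start from the inclusion--exclusion identity in part~\ref{lem:prop_TS_gamma_cup_row} of Lemma~\ref{lem:prop_TS_gamma}, namely $\Gamma(\SOne\cup\STwo)=\Gamma(\SOne)+\Gamma(\STwo)-\dimR{\colTSOne\cap\colTSTwo}$, and rearrange it to isolate the subspace-intersection dimension as $\dimR{\colTSOne\cap\colTSTwo}=\Gamma(\SOne)+\Gamma(\STwo)-\Gamma(\SOne\cup\STwo)$. I would then apply part~\ref{lem:prop_TS_gamma_cap}, which bounds the user information index of the set intersection by the dimension of the intersected subspaces, $\Gamma(\SOne\cap\STwo)\leq\dimR{\colTSOne\cap\colTSTwo}$. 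Substituting the previous expression yields $\Gamma(\SOne\cap\STwo)\leq\Gamma(\SOne)+\Gamma(\STwo)-\Gamma(\SOne\cup\STwo)$, which rearranges precisely into the submodularity inequality \eqref{eq:subm_defn}.

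The step carrying the genuine content is part~\ref{lem:prop_TS_gamma_cap} of the lemma: the inequality $\Gamma(\SOne\cap\STwo)=\dimR{\colTSOneAndTwo}\leq\dimR{\colTSOne\cap\colTSTwo}$ reflects that the subspace $\colTSOneAndTwo$ generated by the shared sensors may be strictly smaller than the geometric intersection $\colTSOne\cap\colTSTwo$. It is exactly this one-sided gap that orients the final inequality in the submodular direction; had the two always coincided, $\Gamma(\cdot)$ would be modular. Since both ingredients are already established in Lemma~\ref{lem:prop_TS_gamma}, I do not anticipate any real obstacle --- the proof reduces to combining the identity in part~\ref{lem:prop_TS_gamma_cup_row} with the bound in part~\ref{lem:prop_TS_gamma_cap}.
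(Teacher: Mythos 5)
Your proposal is correct and takes essentially the same route as the paper's proof: monotonicity from Lemma~\ref{lem:prop_TS_gamma}\ref{lem:prop_TS_gamma_row_TS_monotone}, and submodularity by combining the identity in Lemma~\ref{lem:prop_TS_gamma}\ref{lem:prop_TS_gamma_cup_row} with the bound in Lemma~\ref{lem:prop_TS_gamma}\ref{lem:prop_TS_gamma_cap}, differing only in the order of algebraic rearrangement. Your closing observation that the one-sided gap $\Gamma(\SOne\cap\STwo)\leq\dimR{\colTSOne\cap\colTSTwo}$ is what makes $\Gamma(\cdot)$ submodular rather than modular is accurate, though not needed for the proof itself.
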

\newcommand{\proofPropGammaSubm}{
    \emph{Submodularity}:
    For any $\SOne,\STwo\in\PowerSet$, we show that $\Gamma(\cdot)$ meets
    \eqref{eq:subm_defn} using
    Lemma~\ref{lem:prop_TS_gamma}\ref{lem:prop_TS_gamma_cup_row} and
    Lemma~\ref{lem:prop_TS_gamma}\ref{lem:prop_TS_gamma_cap},
    \begin{align}
        \Gamma(\SOne\cup \STwo)&=\Gamma(\SOne) + \Gamma(\STwo) - \dimR{\colTSOne\cap \colTSTwo} \leq\Gamma(\SOne) + \Gamma(\STwo) - \Gamma(\SOne\cap \STwo).\label{eq:subm_proof_2}
    \end{align}
    \emph{Monotone increasing property}: Follows from
    Lemma~\ref{lem:prop_TS_gamma}\ref{lem:prop_TS_gamma_row_TS_monotone}.
}
\begin{proof}
\proofPropGammaSubm{}
\end{proof}

\begin{corr}
    For any $\mcS\in\Ssitaware$,
    $\Gamma(\mcS)\geq\Gamma(\Stask)$.
    \label{corr:lower_bound_mcS}
\end{corr}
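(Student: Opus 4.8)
The plan is to derive the bound directly from the definition of $\Ssitaware$ together with the monotonicity of the user information index $\Gamma(\cdot)$. First I would recall that, by \eqref{eq:Ssitaware_third}, every $\mcS\in\Ssitaware$ satisfies the equality $\Gamma(\mcS)=\Gamma(\mcS\cup\Stask)$. The key observation is then the trivial inclusion $\Stask\subseteq\mcS\cup\Stask$, which lets me invoke the monotone increasing property established in Lemma~\ref{lem:prop_TS_gamma}\ref{lem:prop_TS_gamma_row_TS_monotone} (equivalently, the monotonicity half of Proposition~\ref{prop:Gamma_subm}) to conclude $\Gamma(\Stask)\leq\Gamma(\mcS\cup\Stask)$.

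Combining these two facts in a single chain gives $\Gamma(\Stask)\leq\Gamma(\mcS\cup\Stask)=\Gamma(\mcS)$, which is exactly the claimed inequality $\Gamma(\mcS)\geq\Gamma(\Stask)$. No additional machinery is required; the argument simply substitutes the situation-awareness equality into the monotonicity bound on the correct inclusion.

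I do not anticipate any genuine obstacle, since the corollary is an immediate consequence of two already-established facts. The only point demanding care is applying monotonicity to the correct inclusion — it is $\Stask$, not $\mcS$, that is contained in $\mcS\cup\Stask$ — and then using the membership $\mcS\in\Ssitaware$ to replace $\Gamma(\mcS\cup\Stask)$ by $\Gamma(\mcS)$. Conceptually, the result formalizes the intuition already captured by $\colCtask\subseteq\colTS$ in Proposition~\ref{prop:Ssitaware_defn}: any interface that enables situation awareness of a task must let the user reconstruct a state subspace at least as large as the one spanned by the task itself.
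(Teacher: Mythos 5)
Your proof is correct and matches the paper's own argument exactly: both use the equality $\Gamma(\mcS)=\Gamma(\mcS\cup\Stask)$ from \eqref{eq:Ssitaware_third} together with the inclusion $\Stask\subseteq\mcS\cup\Stask$ and the monotone increasing property of $\Gamma(\cdot)$ to conclude $\Gamma(\mcS)\geq\Gamma(\Stask)$. No differences worth noting.
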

\newcommand{\proofCorrLowerBoundMcS}{
    Since $\Stask\subseteq (\mcS\cup\Stask)$, we have $
    \Gamma(\mcS)=\Gamma(\mcS\cup\Stask)\geq\Gamma(\Stask)$
    for every $\mcS\in\Ssitaware$
    \eqref{eq:Ssitaware_third}, due to the monotone
    increasing property of $\Gamma(\cdot)$.
}
\begin{corr}%
    \label{corr:Strust}
    Given $\Stask\in\PowerSet$,
    {\renewcommand{\theenumi}{\alph{enumi}}
    \begin{enumerate}
        \item if $\ktrust\leq \Gamma(\Stask)$, then
            $\Ssitaware\subseteq\Strust$.\label{corr:Strust_ssitawareonly}
        \item if $\ktrust= \Gamma(\SensorSet)$, then
            $\Strust\subseteq\Ssitaware$.\label{corr:Strust_strustonly}
\end{enumerate}}
\end{corr}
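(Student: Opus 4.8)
The plan is to establish both inclusions using only the monotonicity of $\Gamma(\cdot)$ (Lemma~\ref{lem:prop_TS_gamma}\ref{lem:prop_TS_gamma_row_TS_monotone}) together with Corollary~\ref{corr:lower_bound_mcS} and the defining inequalities of $\Ssitaware$ in \eqref{eq:Ssitaware_third} and $\Strust$ in \eqref{eq:Strust}. Neither direction requires the finer structural facts about $\colTS$; both reduce to comparing the value of $\Gamma$ on a candidate set against a fixed threshold.

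For part~\ref{corr:Strust_ssitawareonly}), I would take an arbitrary $\mcS\in\Ssitaware$ and apply Corollary~\ref{corr:lower_bound_mcS} to obtain $\Gamma(\mcS)\geq\Gamma(\Stask)$. Chaining this with the hypothesis $\ktrust\leq\Gamma(\Stask)$ gives $\Gamma(\mcS)\geq\ktrust$, which is exactly the membership condition in \eqref{eq:Strust}. Hence $\mcS\in\Strust$, and since $\mcS$ was arbitrary, $\Ssitaware\subseteq\Strust$.

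For part~\ref{corr:Strust_strustonly}), I would take an arbitrary $\mcS\in\Strust$, so that $\Gamma(\mcS)\geq\ktrust=\Gamma(\SensorSet)$. Because $\mcS\subseteq\SensorSet$, monotonicity supplies the reverse inequality $\Gamma(\mcS)\leq\Gamma(\SensorSet)$, and the two bounds force $\Gamma(\mcS)=\Gamma(\SensorSet)$. Since $\Stask\subseteq\SensorSet$, we also have the chain $\mcS\subseteq\mcS\cup\Stask\subseteq\SensorSet$, so applying monotonicity twice sandwiches the middle value: $\Gamma(\mcS)\leq\Gamma(\mcS\cup\Stask)\leq\Gamma(\SensorSet)=\Gamma(\mcS)$. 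Thus $\Gamma(\mcS)=\Gamma(\mcS\cup\Stask)$, which is the membership condition for $\Ssitaware$ in \eqref{eq:Ssitaware_third}, giving $\mcS\in\Ssitaware$ and therefore $\Strust\subseteq\Ssitaware$.

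There is no substantive obstacle here: the corollary is an immediate consequence of the monotonicity of the user information index and the extremal role of $\Gamma(\SensorSet)$. The only point needing minor care is the containment $\mcS\cup\Stask\subseteq\SensorSet$ invoked in part~\ref{corr:Strust_strustonly}), which holds because both $\mcS$ and $\Stask$ lie in $\PowerSet=2^{\SensorSet}$; this is what makes the ``squeeze'' valid and lets the extremal value $\Gamma(\SensorSet)$ propagate to $\Gamma(\mcS\cup\Stask)$.
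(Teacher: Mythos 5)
Your proof is correct and follows essentially the same route as the paper's: part~(a) chains Corollary~\ref{corr:lower_bound_mcS} with the hypothesis $\ktrust\leq\Gamma(\Stask)$, and part~(b) uses monotonicity of $\Gamma(\cdot)$ to force $\Gamma(\mcS)=\Gamma(\SensorSet)$ and then squeezes $\Gamma(\mcS\cup\Stask)$ between these equal values. The only cosmetic difference is that you cite monotonicity via Lemma~\ref{lem:prop_TS_gamma} while the paper cites it via Proposition~\ref{prop:Gamma_subm}, which is the same underlying fact.
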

\newcommand{\proofCorrStrust}{
    To prove~\ref{corr:Strust_ssitawareonly}), we note that
    for any $\mcS\in\Ssitaware$ with
    $\ktrust\leq\Gamma(\Stask)$, we have
    $\ktrust\leq\Gamma(\Stask)\leq\Gamma(\mcS)$ by
    Corollary~\ref{corr:lower_bound_mcS}. Thus,
    $\mcS\in\Strust$.

    To prove~\ref{corr:Strust_strustonly}), we note that for
    any $\mcS\in\Strust$, $\ktrust\leq\Gamma(\mcS)\leq
    \Gamma(\SensorSet)$. Since $\ktrust=
    \Gamma(\SensorSet)$,  $\Gamma(\mcS)=\Gamma(\SensorSet)$
    for any $\mcS\in\Strust$.  By the monotone increasing
    property of $\Gamma(\cdot)$
    (Proposition~\ref{prop:Gamma_subm}), we have
    $\Gamma(\SensorSet) = \Gamma(\mcS) \leq
    \Gamma(\mcS\cup\Stask)\leq\Gamma(\SensorSet)$ which
    implies $\Gamma(\mcS)=\Gamma(\mcS\cup\Stask)$. Thus,
    $\mcS\in\Strust$ with $\ktrust=\Gamma(\SensorSet)$
    implies $\mcS\in\Ssitaware$ by
    \eqref{eq:Ssitaware_third}.
}

Corollary~\ref{corr:lower_bound_mcS} provides a lower bound
on $\Gamma(\mcS)$ for $\mcS\in\Ssitaware$.
Corollary~\ref{corr:Strust}\ref{corr:Strust_ssitawareonly}
states that the trust constraint \eqref{eq:Strust} is
trivially satisfied, if the user interface enables
situation awareness and $\ktrust$ is low enough (i.e., user's trust level is high enough). On other hand, when $\ktrust$ is as high as possible
(i.e., lowest trust level), user-interface design is
\emph{task-agnostic}, and trust constraint satisfaction
automatically enables situation awareness. 

For the illustrative example given in
Section~\ref{sec:prem}, note that
$\Gamma(\Stask)=\Gamma(\{s_v\})=2$ in
Table~\ref{tab:table_example}.  As stated in
Corollary~\ref{corr:Strust}, we see that $\Ssitaware\subset
\Strust$ when $\ktrust\leq  2 =\Gamma(\Stask)$. Further,
$\Strust\subset \Ssitaware$, when $\ktrust = 4 =
\Gamma(\SensorSet)$.

We propose three different approaches to compute a solution
to \eqref{prob:UI} under different ranges of $\ktrust$.

\subsubsection{An optimal solution 
when $\ktrust\leq \Gamma(\Stask)$}

With a high level of trust, by
Corollary~\ref{corr:Strust}\ref{corr:Strust_ssitawareonly},
the trust constraint \eqref{eq:UI_trust} is trivially
satisfied by any choice of $\mcS\in\Ssitaware$.  Because we seek
sensor combinations with minimum
cardinality, we search only 
in $\Ssitawarered$. Since $\Stask$ is a feasible
solution by Lemma~\ref{lem:Stask_feas}, we can reformulate
\eqref{prob:UI} into \eqref{prob:nom} without introducing
any approximation: %
\begin{subequations}
    \begin{align}
        \underset{\SOne\in\Ssitawarered}{\mathrm{minimize}}
        &\quad | \SOne| \label{eq:nom_cost}\\
        \mathrm{subject\ to}
        &\quad |\SOne|\leq|\Stask|
        \label{eq:nom_card}
    \end{align}\label{prob:nom}%
\end{subequations}%
The optimal solution to (\ref{prob:nom}) is also the optimal solution to (\ref{prob:UI}).  
The constraint \eqref{eq:nom_card} requires a brute force search, hence the
numerical implementation in Algorithm \ref{algo:CP_opt}
has a worst-case computation complexity of $
\mathcal{O}\left(\sum_{i=1}^{| \Stask|}
{{|\Sred|}\choose{i}}\right)$.

\begin{algorithm}[t]
    \caption{Optimal solution to \eqref{prob:UI} when $\ktrust\leq
    \Gamma(\Stask)$}\label{algo:CP_opt}
    \begin{algorithmic}[1]
        \Require~Set of all sensors $\SensorSet$, trust parameter $\ktrust$,
        sensors that define the task $\Stask$, the user information index
        $\Gamma(\cdot)$ 
        \Ensure~An optimal solution to \eqref{prob:UI}
        \State Compute $\Ssitawarered$ using Algorithm~\ref{algo:Enum}         \label{line:algo_CP_opt_saware_a1}
        \State
        $\SensorSet_\mathrm{feas,high-trust}\triangleq
        \Ssitawarered\cap\{\mcS: |\mcS| \leq
        |\Stask|\}$\label{line:algo_CP_opt_saware}
        \State $\mcS^\ast\gets\min\{|\mcS|:
        \mcS\in\SensorSet_\mathrm{feas,high-trust}\}$
        \State\Return $\mcS^\ast$
  \end{algorithmic}
\end{algorithm}

\subsubsection{A greedy suboptimal solution 
when $\ktrust= \Gamma(\SensorSet)$}

With the lowest level of trust, the situation
awareness constraints \eqref{eq:UI_PQ} and
\eqref{eq:UI_sitaware} are trivially satisfied for any
$\mcS\in\Strust$ by
Corollary~\ref{corr:Strust}\ref{corr:Strust_strustonly}. By Proposition~\ref{prop:Gamma_subm}, \eqref{prob:UI} simplifies to the following submodular optimization problem (Appendix~\ref{app:sub_max}),
\begin{subequations}
    \begin{align}
        \underset{\mcS\in\PowerSet}{\mathrm{minimize}}&\quad
        | \mcS| \label{eq:UI_subm_cost}\\
        \mathrm{subject\ to}&\quad
        \Gamma(\mcS)\geq\Gamma(\SensorSet)\label{eq:UI_subm_constraint1}.
    \end{align}\label{prob:UI_subm}%
\end{subequations}%
We compute a suboptimal solution with
provable suboptimality guarantees via a greedy algorithm
(Algorithm~\ref{algo:greedy} in Appendix~\ref{app:sub_max}).

\subsubsection{A suboptimal solution 
for $\Gamma(\Stask)<\ktrust<\Gamma(\SensorSet)$}

For trust values in between, 
we propose Algorithm~\ref{algo:CP}, which 
solves a submodular optimization problem (\eqref{prob:subm_max}, Appendix \ref{app:sub_max}) for every
$\SOne\in\Ssitawarered$,
\begin{subequations}
    \begin{align}
        \underset{\STwo_{\SOne}\subseteq \SensorSet\setminus\SOne}{\mathrm{minimize}}&\quad | \STwo_{\SOne}| \label{eq:UI_Q_cost}\\
        \mathrm{subject\ to}&\quad \Gamma(\SOne\cup\STwo_{\SOne})\geq \ktrust\label{eq:UI_Q_constraint1}
    \end{align}\label{prob:UI_Q}%
\end{subequations}%
By Theorem~\ref{thm:Ssitaware_decompose}, the
optimal solution to \eqref{prob:UI} is the minimum
cardinality set in the following: 
\begin{align}
    \SsetSubopt&=\bigcup\nolimits_{\SOne\in\Ssitawarered}
    \{\SOne\cup\STwo_{\SOne}|\STwo_{\SOne}\mbox{ solves
\eqref{prob:UI_Q}}\}\label{eq:SsetSubopt}.
\end{align}
However, solving \eqref{prob:UI_Q} for each
$\SOne\in\Ssitawarered$ is computationally expensive for
large $| \SensorSet|$.  We know that
$\Gamma(\SOne\cup\STwo_{\SOne})$ is a submodular monotone
function in $\STwo_{\SOne}$ for any
$\SOne\in\PowerSet$~\cite[Sec.  1.2]{krause2014submodular}.
Therefore, \eqref{prob:UI_Q} is also a submodular optimization
problem.  
We again use the
greedy approach (Algorithm~\ref{algo:greedy} in
Appendix~\ref{app:sub_max}) to compute a suboptimal solution
$\STwo_\SOne^\dagger$.  Note that lines $3$--$6$ of
Algorithm~\ref{algo:CP} is trivially parallelizable.

\begin{algorithm}[t]
    \caption{A suboptimal solution to \eqref{prob:UI} for $\Gamma(\Stask)<\ktrust<\Gamma(\SensorSet)$}\label{algo:CP}
    \begin{algorithmic}[1]
        \Require~Set of all sensors $\SensorSet$, trust parameter $\ktrust$,
        sensors that define the task $\Stask$, the user information index
        $\Gamma(\cdot)$ 
        \Ensure~A suboptimal solution to \eqref{prob:UI}
        $\SSubopt$
        \State $\SsetSubopt\gets \emptyset$
        \State Compute $\Ssitawarered$ using Algorithm~\ref{algo:Enum}\label{line:algo_CP_saware}
            \For{$\SOne\in\Ssitawarered$}\label{line:greedyforbegin}
                \State Compute $\STwo_{\SOne}^\dagger$ by
                solving \eqref{prob:UI_Q} given $\SOne$
                suboptimally using
                Algorithm~\ref{algo:greedy}
                (see
                Appendix~\ref{app:sub_max})\label{line:Alg3_Alg4}
                \State Add $\SOne\cup\STwo_{\SOne}^\dagger$
                to $\SsetSubopt$
            \EndFor\label{line:greedyforend}
            \State $\SSubopt\gets\min\{|\mcS|:
        \mcS\in\SsetSubopt\}$\label{line:min}        
        \State\Return $\SSubopt$
  \end{algorithmic}
\end{algorithm}

To quantify the suboptimality bound for Algorithm~\ref{algo:CP}, 
we define a real-valued function $\Delta_\Gamma:
\mathbb{N}_{[1, \Gamma(\SensorSet)]}\times
\PowerSet \to \mathbb{R}$ as
\begin{align}
    \Delta_\Gamma(k, \mcS)&= \begin{cases}
        \begin{array}{ll}
            \log\left(\frac{\Gamma(\SensorSet)}{k-\Gamma(\mcS)}\right)
            & \Gamma(\mcS) < k\\ 
            \infty & \mbox{otherwise}.
        \end{array}
    \end{cases}\label{eq:Delta_defn}
\end{align}

\begin{prop}[\textbf{Suboptimality bound for
    Alg.~\ref{algo:CP}}]\label{prop:algo2_bound}
    For $\Gamma(\Stask)<\ktrust<\Gamma(\SensorSet)$,
    Algorithm~\ref{algo:CP} computes a suboptimal solution
    $\SSubopt$ to \eqref{prob:UI} that satisfies
    \begin{align}
        1\leq \frac{|\SSubopt|}{| \Sopt|}&\leq \left(1+ \max_{\SOne\in \Ssitawarered}\Delta_\Gamma(\ktrust,\SOne\cup\STwo_{\SOne}^-)\right)\label{eq:subopt_bound_algoCP}
    \end{align}
    where $\SOne\cup\STwo_{\SOne}^{-}$ is the solution prior
    to the termination step of
    Algorithm~\ref{algo:greedy} in Line~\ref{line:Alg3_Alg4}
    of Algorithm~\ref{algo:CP}.
\end{prop}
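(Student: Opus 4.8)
The plan is to prove the two inequalities in \eqref{eq:subopt_bound_algoCP} separately, with the lower bound following from feasibility and the upper bound reducing to a per-$\SOne$ greedy submodular-cover guarantee that I then aggregate. For the lower bound $1 \le |\SSubopt|/|\Sopt|$, I would first check that $\SSubopt$ is feasible for \eqref{prob:UI}. Each candidate $\SOne \cup \STwo_{\SOne}^\dagger$ satisfies the trust constraint \eqref{eq:UI_trust} by construction of \eqref{prob:UI_Q}; it also enables situation awareness because, for $\SOne \in \Ssitawarered$, \eqref{eq:soneset} together with Lemma~\ref{lem:prop_TS_gamma}\ref{lem:prop_TS_gamma_TS_equality_converse} gives $\colCtask \subseteq \colTStask \subseteq \colTSOne$, and monotonicity (Lemma~\ref{lem:prop_TS_gamma}\ref{lem:prop_TS_gamma_row_TS_monotone}) then yields $\colTStask \subseteq \col{T_{(\SOne \cup \STwo_{\SOne}^\dagger)}}$, so $\SOne \cup \STwo_{\SOne}^\dagger \in \Ssitaware$ by Proposition~\ref{prop:Ssitaware_defn}. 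Since $\Sopt$ has minimum cardinality among feasible sets, $|\Sopt| \le |\SSubopt|$.

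For the upper bound I would exploit the decomposition behind \eqref{eq:SsetSubopt}. By Theorem~\ref{thm:Ssitaware_decompose}, every feasible set factors as $\SOne \cup \STwo_\SOne$ with $\SOne \in \Ssitawarered$ and $\STwo_\SOne \subseteq \SensorSet \setminus \SOne$, and the inner minimization over $\STwo_\SOne$ is exactly \eqref{prob:UI_Q}; hence $|\Sopt| = \min_{\SOne \in \Ssitawarered}(|\SOne| + |\STwo_{\SOne}^\ast|)$ and the algorithm returns $|\SSubopt| = \min_{\SOne \in \Ssitawarered}(|\SOne| + |\STwo_{\SOne}^\dagger|)$, where $\STwo_{\SOne}^\ast$ solves \eqref{prob:UI_Q} exactly and $\STwo_{\SOne}^\dagger$ is the greedy output (disjointness of $\SOne$ and $\STwo_\SOne$ gives the additive cardinalities). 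The core technical lemma is a per-$\SOne$ bound. Fix $\SOne$ with $\Gamma(\SOne) < \ktrust$; since $\STwo \mapsto \Gamma(\SOne \cup \STwo)$ is monotone submodular, the standard submodular-cover argument (Appendix~\ref{app:sub_max}) applied to the gap $\delta_i = \ktrust - \Gamma(\SOne \cup \STwo^i)$ over the greedy iterates $\emptyset = \STwo^0 \subseteq \cdots \subseteq \STwo^\ell = \STwo_{\SOne}^\dagger$ shows the best marginal gain is at least $\delta_{i-1}/|\STwo_{\SOne}^\ast|$, so $\delta_i \le \delta_{i-1}(1 - 1/|\STwo_{\SOne}^\ast|) \le \delta_0\, e^{-i/|\STwo_{\SOne}^\ast|}$. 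Evaluating at the penultimate iterate $\STwo_{\SOne}^- := \STwo^{\ell-1}$ and using $\delta_0 = \ktrust - \Gamma(\SOne) \le \ktrust \le \Gamma(\SensorSet)$ gives $\ell - 1 \le |\STwo_{\SOne}^\ast|\log(\Gamma(\SensorSet)/(\ktrust - \Gamma(\SOne \cup \STwo_{\SOne}^-)))$, and since $|\STwo_{\SOne}^\ast| \ge 1$ this rearranges to $|\STwo_{\SOne}^\dagger| \le |\STwo_{\SOne}^\ast|\,(1 + \Delta_\Gamma(\ktrust, \SOne \cup \STwo_{\SOne}^-))$.

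To finish, let $\SOne^\ast$ be the minimizer generating $\Sopt$, so $|\Sopt| = |\SOne^\ast| + |\STwo_{\SOne^\ast}^\ast|$. Evaluating the algorithm's candidate at the same $\SOne^\ast$ and applying the per-$\SOne$ bound, $|\SSubopt| \le |\SOne^\ast| + |\STwo_{\SOne^\ast}^\dagger| \le |\SOne^\ast| + |\STwo_{\SOne^\ast}^\ast|\,(1 + \Delta^\ast)$, where $\Delta^\ast = \Delta_\Gamma(\ktrust, \SOne^\ast \cup \STwo_{\SOne^\ast}^-) \le \Delta_{\max} := \max_{\SOne \in \Ssitawarered}\Delta_\Gamma(\ktrust, \SOne \cup \STwo_{\SOne}^-)$. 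Using $|\STwo_{\SOne^\ast}^\ast| \le |\SOne^\ast| + |\STwo_{\SOne^\ast}^\ast| = |\Sopt|$ and $\Delta^\ast \le \Delta_{\max}$, the error term obeys $|\STwo_{\SOne^\ast}^\ast|\,\Delta^\ast \le |\Sopt|\,\Delta_{\max}$, so $|\SSubopt| \le |\Sopt|\,(1 + \Delta_{\max})$, which is \eqref{eq:subopt_bound_algoCP}.

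The main obstacle is the aggregation step rather than the greedy estimate: the per-$\SOne$ guarantee is multiplicative only in $|\STwo_{\SOne}^\ast|$, whereas the target ratio is measured against $|\Sopt| = |\SOne^\ast| + |\STwo_{\SOne^\ast}^\ast|$, so one must argue that the additive $|\SOne^\ast|$ term only enlarges the denominator and that replacing $|\STwo_{\SOne^\ast}^\ast|$ by $|\Sopt|$ in the error term is legitimate. I would also dispatch two edge cases: if $\Gamma(\SOne) \ge \ktrust$ then $\STwo_{\SOne}^\ast = \STwo_{\SOne}^\dagger = \emptyset$, so that $\SOne$ contributes equally to both minima and is harmless; and the penultimate iterate $\STwo_{\SOne}^-$ is well defined precisely when $\Gamma(\SOne) < \ktrust$, which is exactly when $\Delta_\Gamma$ is finite. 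Finally, $\log$ must be the natural logarithm for the inequality $(1 - 1/|\STwo_{\SOne}^\ast|)^i \le e^{-i/|\STwo_{\SOne}^\ast|}$ to hold.
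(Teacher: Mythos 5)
Your proof is correct and follows essentially the same route as the paper's: decompose the (unknown) optimum via Theorem~\ref{thm:Ssitaware_decompose}, apply the greedy submodular-cover guarantee to \eqref{prob:UI_Q} at $\SOne^\ast$ to get $|\STwo_{\SOne^\ast}^{\dagger}|\leq|\STwo_{\SOne^\ast}^\ast|\left(1+\Delta_\Gamma(\ktrust,\SOne^\ast\cup\STwo_{\SOne^\ast}^{-})\right)$, sandwich $|\SSubopt|$ between $|\Sopt|$ and $|\SOne^\ast|+|\STwo_{\SOne^\ast}^{\dagger}|$ via the minimum in Line~\ref{line:min}, and rearrange using $|\STwo_{\SOne^\ast}^\ast|\leq|\Sopt|$ and the maximum over $\Ssitawarered$. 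The only differences are matters of detail, not of route: you re-derive the greedy bound of Lemma~\ref{lem:greedy} inline (with the $\Delta_\Gamma$ upper-bounding step the paper states as an observation), and you treat explicitly the feasibility argument for the lower bound and the edge case $\Gamma(\SOne)\geq\ktrust$, which the paper leaves implicit.
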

\begin{proof}
    Let $\mcS^\ast=\SOne^\ast\cup\STwo_{\SOne^\ast}^\ast$ be
    the (unknown) optimal solution to \eqref{prob:UI}, where
    $\SOne^\ast\subseteq\Ssitawarered$. Such a decomposition
    is guaranteed by Theorem~\ref{thm:Ssitaware_decompose}.
    Let $\STwo_{\SOne^\ast}^{\dagger}$ be the solution of
    \eqref{prob:UI_Q} for $\SOne^\ast$ computed using
    Algorithm~\ref{algo:greedy}, and
    $\STwo_{\SOne^\ast}^{-}$ be the solution prior to the
    termination step. By Lemma~\ref{lem:greedy} in
    Appendix~\ref{app:sub_max},
    \begin{align}
        |\STwo_{\SOne^\ast}^{\dagger}|&\leq|
        \STwo_{\SOne^\ast}^\ast|\left(1+\Delta_{\Gamma}(\ktrust,\SOne^\ast\cup\STwo_{\SOne^\ast}^{-})\right)\label{eq:Stwo_subopt_defn}.
    \end{align}
    Equation \eqref{eq:Stwo_subopt_defn} uses the
    observation that $\Gamma(\emptyset)=0$ and upper bounds the suboptimality bound in Lemma~\ref{lem:greedy} in Appendix~\ref{app:sub_max} using $\Delta_\Gamma$. The upper bound and the finiteness of  $\Delta_\Gamma$ follows from the fact that $\Gamma(\SOne^\ast\cup\STwo_{\SOne^\ast}^\dagger)\geq \ktrust > \Gamma(\SOne^\ast\cup\STwo_{\SOne^\ast}^{-})$ by the termination rule of Algorithm~\ref{algo:greedy}.
    
    By line~\ref{line:min} of Algorithm~\ref{algo:CP}, we have
    \begin{align}
        | \mcS^\ast|=| \SOne^\ast| + | \STwo_{\SOne^\ast}| \leq |\SSubopt|&\leq | \SOne^\ast| + | \STwo_{\SOne^\ast}^{\dagger}|.\label{eq:intermediate_subopt_bound}
    \end{align}
    Applying \eqref{eq:Stwo_subopt_defn} to \label{eq:intermediate_subopt_bounds} and rearranging the
    resulting terms,
    \begin{align}
        1 \leq \frac{|\SSubopt|}{| \mcS^\ast|} \leq \left(1
        + \frac{| \STwo_{\SOne^\ast}^\ast|}{|
    \mcS^\ast|}\Delta_{\Gamma}(\ktrust,\SOne^\ast\cup\STwo_{\SOne^\ast}^{-})\right)
    &\leq \left(1 +
        \Delta_{\Gamma}(\ktrust,\SOne^\ast\cup\STwo_{\SOne^\ast}^{-})\right)
        \nonumber \\
        &\leq \left(1+ \max_{\SOne\in
        \Ssitawarered}\Delta_\Gamma(\ktrust,\SOne\cup\STwo_{\SOne}^-)\right)
        \nonumber
    \end{align}
    since $|\STwo_{\SOne^\ast}^\ast|\leq| \mcS^\ast|$ and
    $\Delta_\Gamma(\ktrust,\SOne^\ast\cup\STwo_{\SOne^\ast}^-)$ is bounded from above by $\displaystyle \max_{\SOne\in
    \Ssitawarered}\Delta_\Gamma(\ktrust,\SOne\cup\STwo_{\SOne}^-)$.
    For every
    $\SOne\in\Ssitawarered$, 
    $\Delta_\Gamma(\ktrust,\SOne\cup\STwo_{\SOne}^-)$ is finite since
    $\Gamma(\SOne\cup\STwo_{\SOne}^-) < \ktrust$ by the termination rule of Algorithm~\ref{algo:greedy}.
\end{proof}

We simplify \eqref{eq:subopt_bound_algoCP} to obtain
a weaker upper bound,
\begin{align}
    | \Sopt|\leq |\SSubopt| \leq | \Sopt| (1+
    \log(\Gamma(\SensorSet))).\label{eq:subopt_bound_algoCP_indep}
\end{align}
Equation \eqref{eq:subopt_bound_algoCP_indep} follows from
the observation that
$\Delta_\Gamma(\ktrust,\SOne\cup\STwo_{\SOne}^-)\leq \log(\Gamma(\SensorSet))$ for every
$\SOne\in\Ssitawarered$. Therefore, $\displaystyle\max_{\SOne\in
\Ssitawarered}\Delta_\Gamma(\ktrust,\SOne\cup\STwo_{\SOne}^-)
\leq \log(\Gamma(\SensorSet)))$.
Equation \eqref{eq:subopt_bound_algoCP_indep} shows that the upper bound in \eqref{eq:subopt_bound_algoCP} can not be arbitrarily loose.

\begin{prop}[\textbf{Computational complexity bound for
    Alg.~\ref{algo:CP})}]\label{prop:algo2_runtime}
    For $\Gamma(\Stask)<\ktrust<\Gamma(\SensorSet)$, Algorithm~\ref{algo:CP} has
    a worst-case computational complexity of $
    \mathcal{O}\left(2^{|\Sred|} {|\SensorSet|}^2\right)$.
\end{prop}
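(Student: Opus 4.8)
The plan is to trace through Algorithm~\ref{algo:CP} line by line, bound the cost of each step, and then combine them, treating each evaluation of the user information index $\Gamma(\cdot)$ (a rank computation) as a unit-cost oracle call so that the resulting bound is stated purely in terms of $|\SensorSet|$ and $|\Sred|$, as claimed.

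First I would bound the number of iterations of the outer loop (lines~\ref{line:greedyforbegin}--\ref{line:greedyforend}). The loop ranges over $\SOne\in\Ssitawarered$, and by \eqref{eq:soneset} we have $\Ssitawarered\subseteq 2^{\Sred}$, so the loop executes at most $|\Ssitawarered|\leq 2^{|\Sred|}$ times. Next I would bound the per-iteration cost, which is dominated by line~\ref{line:Alg3_Alg4}, the call to the greedy Algorithm~\ref{algo:greedy} to solve \eqref{prob:UI_Q}. For a fixed $\SOne$, this subproblem augments $\SOne$ by a submodular-cover greedy over the ground set $\SensorSet\setminus\SOne$, whose cardinality is at most $|\SensorSet|$. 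The greedy performs at most $|\SensorSet|$ augmentation rounds (each round strictly grows the selected set inside a ground set of size $\leq|\SensorSet|$), and in each round it evaluates the marginal gain of $\Gamma(\SOne\cup\cdot)$ for each of the at-most-$|\SensorSet|$ remaining candidate sensors; hence its cost is $\mathcal{O}\!\left(|\SensorSet|^2\right)$ oracle calls. Multiplying the loop count by the per-iteration cost yields $\mathcal{O}\!\left(2^{|\Sred|}|\SensorSet|^2\right)$.

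Finally I would verify that the remaining steps do not dominate. Computing $\Ssitawarered$ via Algorithm~\ref{algo:Enum} (line~\ref{line:algo_CP_saware}) enumerates $2^{\Sred}$ and performs $\mathcal{O}(1)$ index evaluations per element, costing $\mathcal{O}\!\left(2^{|\Sred|}\right)$; the concluding minimization in line~\ref{line:min} scans the at-most-$2^{|\Sred|}$ sets collected in $\SsetSubopt$, also $\mathcal{O}\!\left(2^{|\Sred|}\right)$. Both are absorbed into the dominant term, giving the claimed $\mathcal{O}\!\left(2^{|\Sred|}|\SensorSet|^2\right)$.

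The main obstacle will be pinning down the worst-case cost of the greedy Algorithm~\ref{algo:greedy}, since that $\mathcal{O}\!\left(|\SensorSet|^2\right)$ factor is the only nontrivial ingredient and everything else is routine cardinality counting. The argument hinges on the termination rule of Algorithm~\ref{algo:greedy} guaranteeing at most $|\SensorSet|$ rounds and on each round's marginal-gain sweep touching at most $|\SensorSet|$ candidates. I would also state explicitly that $\Gamma$ evaluations are treated as oracle calls, since otherwise factors of the state dimension $n$ arising from the underlying rank computation would enter the bound, and the stated result deliberately suppresses them.
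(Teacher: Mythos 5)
Your proposal is correct and follows essentially the same route as the paper's proof: a line-by-line accounting of Algorithm~\ref{algo:CP} in which the loop over $\Ssitawarered$ contributes at most $|\Ssitawarered|\leq 2^{|\Sred|}$ iterations, each dominated by the $\mathcal{O}\left({|\SensorSet|}^2\right)$ greedy call of Algorithm~\ref{algo:greedy} (the paper simply cites Lemma~\ref{lem:greedy} for this factor, whereas you re-derive it from the round and marginal-gain counts), with the $\mathcal{O}\left(2^{|\Sred|}\right)$ costs of computing $\Ssitawarered$ and of the final minimization absorbed into the dominant term. Your explicit remark that $\Gamma(\cdot)$ evaluations are treated as unit-cost oracle calls makes precise an accounting convention the paper leaves implicit, but does not change the argument.
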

\begin{proof}
    In Algorithm~\ref{algo:CP}, the evaluation of
    lines~\ref{line:algo_CP_saware},~\ref{line:greedyforbegin}--\ref{line:greedyforend},
    and~\ref{line:min} have a worst-case computational
    complexity of $\mathcal{O}(2^{|\Sred|})$,
    $\mathcal{O}(| \Ssitawarered|
    {|\SensorSet|}^2)$ (from Lemma~\ref{lem:greedy}
    in Appendix~\ref{app:sub_max}), and $\mathcal{O}(|
    \Ssitawarered|)$, respectively.  The worst-case
    computational complexity of Algorithm~\ref{algo:CP} is
    $ \mathcal{O}(| \Ssitawarered|
    {|\SensorSet|}^2 + | \Ssitawarered| +
    | 2^\Sred|)$. Using the observation that $|
    \Ssitawarered|\leq2^{|\Sred|}$, we obtain
    the simplified worst-case complexity
    bound.
\end{proof}

An alternative heuristic to Algorithm~\ref{algo:CP} is to
use Algorithm~\ref{algo:CP_opt} to solve \eqref{prob:nom} to
obtain $\SOne^\dagger\subseteq\Ssitawarered$ that enables
situation awareness, and then solve the associated
submodular maximization problem \eqref{prob:UI_Q} with
$\SOne^\dagger$. This approach may provide a faster
solution, since the search for $\SOne^\dagger$ is
assisted by the cardinality constraint
\eqref{eq:nom_card}.  Further, it only requires the solution
of a single submodular maximization problem, as opposed to a
collection of $| \Ssitawarered|$ problems in
Algorithm~\ref{algo:CP}. However, the
suboptimality bound of Algorithm~\ref{algo:CP} no longer
holds for this approach, since
\eqref{eq:intermediate_subopt_bound} fails to hold.  

The approaches proposed in this section are 
summarized in Table~\ref{tab:all_approaches}. 

\begin{table}
    \centering
    \setlength\arraycolsep{5pt}
    \begin{tabular}{|m{4cm}|m{3cm}|m{3cm}|m{3.5cm}|}
    \hline									%
    Range of $\ktrust$ & Method &  Optimality & Worst-case compute\newline complexity \\
    \hline\hline
    $\ktrust$ $\leq\Gamma(\Stask)$  &
    Alg.~\ref{algo:CP_opt}  &  Optimal    & $\mathcal{O}\left(\sum_{i=1}^{| \Stask|} {{|\Sred|}\choose{i}}\right)$ \\ \hline
    $\Gamma(\Stask)$ $<\ktrust$ $<\Gamma(\SensorSet)$ & Alg.~\ref{algo:CP}  & 
    Suboptimal as in \eqref{eq:subopt_bound_algoCP} &
    $\mathcal{O}\left(2^{|\Sred|} {|\SensorSet|}^2\right)$ \\ \hline
    $\ktrust$ $=\Gamma(\SensorSet)$ &  Solve \eqref{prob:UI_subm} via Alg.~\ref{algo:greedy}  & Suboptimal as in \eqref{eq:subopt_bound} &$\mathcal{O}\left({|\SensorSet|}^2\right)$ \\ \hline
   \end{tabular}
   \caption{Solution methods to \eqref{prob:UI}
   for $\ktrust\in[1,\Gamma(\SensorSet)]$.}
   \label{tab:all_approaches}
\end{table}

\section{Efficient implementation of Algorithm~\ref{algo:Enum}}
\label{sec:CP}

We propose a computationally efficient implementation of
Algorithm~\ref{algo:Enum} using constraint programming and a
novel enumeration framework based on binary number
representation. The proposed approach %
exploits the monotonicity properties of the
user information index function.

\subsection{Enumerating $\Ssitawarered$ via constraint
programming}

Constraint programming exploits transitivity properties in
set functions to reduce the search space \cite{rossi2006handbook}. 
(For example,
for a monotone increasing constraint $f(\mcS)\leq k$ for some 
$k\in \mathbb{N}$, infeasibility of $\SOne\in
\PowerSet$ implies infeasibility of all $\STwo\in \PowerSet$
such that $\SOne\subseteq \STwo$.)
To avoid enumeration in 
Algorithm~\ref{algo:Enum} of the set $\Ssitawarered$
\eqref{eq:soneset} using
Theorem~\ref{thm:Ssitaware_decompose}, 
we construct the feasibility problem corresponding to (16), 
\begin{subequations}
    \begin{align}
        \mathrm{find\ all}         
        & \hspace*{0.5em}
            \begin{array}{l}
                \SOne\in2^{\Sred},\\
                t\in
                \mathbb{N}_{[\Gamma(\Stask),\Gamma(\Sred)]}
            \end{array} \label{eq:soneset_cost}\\
        \mathrm{subject\ to}  
        &\quad \Gamma(\SOne) \geq
        t\label{eq:soneset_GammaSone}\\
        & \quad\Gamma(\SOne\cup\Stask) \leq
        t\label{eq:soneset_GammaSoneCupStask}
    \end{align}\label{prob:feas_soneset}%
\end{subequations}%
Since $\Gamma(\cdot)$ is 
monotone increasing (Proposition~\ref{prop:Gamma_subm}), we can
prune the search space when a tuple $(\SOne,t)$ that
does not satisfy \eqref{eq:soneset_GammaSoneCupStask} is
encountered. Specifically, given $\SOne_1\in\Sred$ such that
$\Gamma(\SOne_1\cup\Stask)\not\leq t_0$ for some $t_0\in
\mathbb{N}_{[\Gamma(\Stask),\Gamma(\Sred)]}$, then for every
superset $\SOne_2\in\Sred$, $\SOne_1\subseteq\SOne_2$ and
$t\leq t_0$, we know $\Gamma(\SOne_2\cup\Stask)\not\leq t$.
We can also incorporate the cardinality constraint
\eqref{eq:nom_card} to further restrict the search space in
Algorithm~\ref{algo:CP_opt}. 

\begin{prop}
    A set $\SOne\subseteq \Sred$ is feasible for
    \eqref{prob:feas_soneset} for some
    $t\in\mathbb{N}_{[\Gamma(\Stask),\Gamma(\Sred)]}$ if and
    only if $\SOne\in\Ssitawarered$.
\end{prop}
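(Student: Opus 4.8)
The plan is to prove the biconditional directly from the definition \eqref{eq:soneset} of $\Ssitawarered$ together with the monotonicity of $\Gamma(\cdot)$ established in Lemma~\ref{lem:prop_TS_gamma}\ref{lem:prop_TS_gamma_row_TS_monotone}. The governing observation is that the two constraints \eqref{eq:soneset_GammaSone} and \eqref{eq:soneset_GammaSoneCupStask} sandwich the integer $t$ between $\Gamma(\SOne\cup\Stask)$ and $\Gamma(\SOne)$; since monotonicity already forces $\Gamma(\SOne)\leq\Gamma(\SOne\cup\Stask)$, an admissible $t$ can exist precisely when these two quantities coincide, which is exactly the membership condition for $\Ssitawarered$.

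First, for the forward direction, I would assume $\SOne\subseteq\Sred$ is feasible for \eqref{prob:feas_soneset} for some admissible $t$. Chaining \eqref{eq:soneset_GammaSoneCupStask} and \eqref{eq:soneset_GammaSone} yields $\Gamma(\SOne\cup\Stask)\leq t\leq\Gamma(\SOne)$, hence $\Gamma(\SOne\cup\Stask)\leq\Gamma(\SOne)$. Because $\SOne\subseteq\SOne\cup\Stask$, Lemma~\ref{lem:prop_TS_gamma}\ref{lem:prop_TS_gamma_row_TS_monotone} supplies the reverse inequality $\Gamma(\SOne)\leq\Gamma(\SOne\cup\Stask)$, so the two are equal. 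By \eqref{eq:soneset}, this is exactly $\SOne\in\Ssitawarered$.

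For the reverse direction, I would assume $\SOne\in\Ssitawarered$, so $\Gamma(\SOne)=\Gamma(\SOne\cup\Stask)$, and exhibit the witness $t=\Gamma(\SOne)$, which is a natural number by Definition~\ref{defn:Uii}. Both \eqref{eq:soneset_GammaSone} and \eqref{eq:soneset_GammaSoneCupStask} then hold with equality, so it only remains to verify $t\in\mathbb{N}_{[\Gamma(\Stask),\Gamma(\Sred)]}$. The lower bound follows from $\Stask\subseteq\SOne\cup\Stask$ and monotonicity, giving $t=\Gamma(\SOne\cup\Stask)\geq\Gamma(\Stask)$; the upper bound follows from $\SOne\subseteq\Sred$ and monotonicity, giving $t=\Gamma(\SOne)\leq\Gamma(\Sred)$. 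Nonemptiness of this index range, i.e. $\Gamma(\Stask)\leq\Gamma(\Sred)$, is in turn guaranteed by $\Stask\subseteq\Sred$ from Lemma~\ref{lem:Stask_feas}. Hence $\SOne$ is feasible.

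I expect no serious obstacle, as the argument reduces to a short sandwiching step. The only point requiring genuine care is confirming that the chosen witness $t$ respects the prescribed range $\mathbb{N}_{[\Gamma(\Stask),\Gamma(\Sred)]}$, rather than merely satisfying the two functional inequalities; this is where monotonicity, applied to the inclusions $\Stask\subseteq\SOne\cup\Stask$ and $\SOne\subseteq\Sred$, must be invoked together with Lemma~\ref{lem:Stask_feas}.
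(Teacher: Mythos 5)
Your proof is correct and follows essentially the same route as the paper's: both rest on the observation that constraints \eqref{eq:soneset_GammaSone} and \eqref{eq:soneset_GammaSoneCupStask} sandwich $t$ between $\Gamma(\SOne\cup\Stask)$ and $\Gamma(\SOne)$, which by monotonicity of $\Gamma(\cdot)$ forces $t=\Gamma(\SOne)=\Gamma(\SOne\cup\Stask)$, i.e.\ exactly the membership condition \eqref{eq:soneset}. You merely spell out the two directions and the range check $t\in\mathbb{N}_{[\Gamma(\Stask),\Gamma(\Sred)]}$ more explicitly than the paper, which compresses that step into an appeal to monotonicity and Corollary~\ref{corr:lower_bound_mcS}.
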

\begin{proof}
    The constraints \eqref{eq:soneset_GammaSone} and \eqref{eq:soneset_GammaSoneCupStask} together are equivalent to the following equality constraint (identical to \eqref{eq:soneset}), $$t=\Gamma(\SOne)=\Gamma(\SOne\cup\Stask).$$ 
    We have $t\in
        \mathbb{N}_{[\Gamma(\Stask),\Gamma(\Sred)]}$, 
    since $\Gamma(\cdot)$ is monotone increasing (Proposition~\ref{prop:Gamma_subm}) and  Corollary~\ref{corr:lower_bound_mcS}.
\end{proof}

\subsection{Computationally efficient enumeration of the search space}
\label{sub:gen}

We employ constraint propagation to enumerate the search space, which, for ease of discussion, we presume is $\PowerSet$.
We desire to
create an oracle, referred to as a \emph{generator}, that
provides the next sensor combination in $\PowerSet$ which
needs to be evaluated. 
The generator must
satisfy three requirements:
\begin{enumerate}
    \item[R1)] produce sensor combinations within $\PowerSet$ in an exhaustive manner, 
    \item[R2)] eliminate sensor combinations that are a superset of a given set, and 
    \item[R3)] enforce cardinality constraints.
\end{enumerate}

Any sensor combination $\mcS\in \PowerSet$ can be associated
with a unique $| \SensorSet|$-bit long binary number
representation, with the bit values set to one at the
respective positions of every selected sensor. We also use
a bijection of this representation to the corresponding
decimal number $N_{\mcS}\in \mathbb{N}_{\left[0, 2^{|
\SensorSet|} -1\right]}$. Therefore, any generator over
$\mathbb{N}_{\left[0, 2^{| \SensorSet|} -1\right]}$
exhaustively enumerates $\PowerSet$.  

A naive approach to enumerate $\PowerSet$ is to use a linear generator, which enumerates $\mathbb{N}_{\left[0, 2^{| \SensorSet|} -1\right]}$ by incrementing $N_{\mcS}$ by $1$.
However, satisfying R2) and R3) with a linear generator is difficult.
We propose a generator that satisfies all three requirements by enumerating over a tabular representation of $\PowerSet$.
We associate a unique column number $\coln_{\mcS}\in \mathbb{N}_{[0,| \SensorSet|-1]}$ and row number $\rown_{\mcS}\in \mathbb{N}_{\left[0, 2^{| \SensorSet|-1} -1\right]}$ with every sensor combination $\mcS\in\PowerSet$,
\begin{subequations}
    \begin{align}
        \coln_{\mcS}&=\lfloor\log_2( N_{\mcS})\rfloor\label{eq:coln_defn}\\
        \rown_{\mcS}&= N_{\mcS}-2^{\coln_{\mcS}}\label{eq:rown_defn}
    \end{align}\label{eq:table_defn}%
\end{subequations}%
where $\lfloor a\rfloor$ is the floor of $a\in \mathbb{R}$, the largest integer below $a$.
The column number is the position of the most significant bit of the binary representation of $N_{\mcS}$, and the row number is the decimal representation of the number defined by the remaining bits.
The number of non-zero bits in the binary representation of $\mcS$ is equal to $|\mcS|$.

We demonstrate this approach on 
$\SensorSet=\{s_0,s_1,s_2,s_3,s_4\}$ in 
Table~\ref{tab:bin_tab}.  For illustration, 
consider $\mcS=\{s_0,s_4\}$.  We associate with
$\mcS$ a binary representation, $10001$, based on the
selection of sensors. The decimal number representation of
$10001$ is $N_\mcS=17$.  Note that $| \mcS| = 2$ is the
number of non-zero bits in the binary representation
$10001$. By \eqref{eq:coln_defn}, $\coln_{\mcS}=4$ which is
the position (count starts from zero) of the most
significant non-zero bit.  By \eqref{eq:rown_defn},
$\rown_{\mcS}=1 = 17 - 2^4$.

\begin{table}[!ht]
\centering
\begin{tabular}{l|lllll}
Row & $s_0$ & $s_1$ & $s_2$ & $s_3$ & $s_4$ \\ \hline
 0  & 1     & 2     & 4     & 8     & 16    \\
 1  &       & 3     & 5     & 9     & 17    \\
 2  &       &       & 6     & 10    & 18    \\
 3  &       &       & 7     & 11    & 19    \\
 4  &       &       &       & 12    & 20    \\
 5  &       &       &       & 13    & 21    \\
 6  &       &       &       & 14    & 22    \\
 7  &       &       &       & 15    & 23    \\
 8  &       &       &       &       & 24    \\
 9  &       &       &       &       & 25    \\
10  &       &       &       &       & 26    \\
11  &       &       &       &       & 27    \\
12  &       &       &       &       & 28    \\
13  &       &       &       &       & 29    \\
14  &       &       &       &       & 30    \\
15  &       &       &       &       & 31    
\end{tabular}
\caption{Binary Iteration Table for $\SensorSet=\{s_0,s_1,s_2,s_3,s_4\}$}
\label{tab:bin_tab}
\end{table}

\subsubsection{Satisfaction of R1}

All numbers in $\mathbb{N}_{\left[0, 2^{| \SensorSet|} -1\right]}$ have a unique position in the tabular representation of $\PowerSet$ which follows from the unique binary representation of $N_{\mcS}$ by \eqref{eq:table_defn}.
Thus, any enumeration of the proposed table satisfies R1.

\subsubsection{Satisfaction of R2}

Due to \eqref{eq:rown_defn}, each row contains sensor
combinations with a similar pattern in the lower significant
bits.  Specifically, the binary representation of the row
number coincides with the binary representation of
$N_{\mcS}$ \emph{without} its most significant bit. For example,
row $3$ of Table~\ref{tab:bin_tab} contains numbers $7$
(select $s_0,s_1,s_2$), $11$ (select $s_0,s_1,s_3$), and
$19$ (select $s_0,s_1,s_4$).  All these numbers have the
elements $s_0$ and $s_1$ in common, since their row number,
$3$, has the binary representation $00011$.

Using this observation, we skip enumeration of the
supersets of infeasible sets, by maintaining a collection of
rows to skip. For example, suppose we
wish to skip enumeration of all supersets of
$\SOne=\{s_0,s_1\}$. This is the case when $\SOne$
violates \eqref{eq:soneset_GammaSoneCupStask}. 
We must skip rows $3,7,11$, and $15$ of Table~\ref{tab:bin_tab},
as they are the row numbers with the pattern $XX11$ where
$X$ indicates ``don't care'' bits.  %
The generator then produces $\mcS$, with
$N_{\mcS}\not\in\{7,11,19,15,23,27,31\}=\left\{00111,01011,10011,01111,10111,11011,11111\right\}$.  Note that each of the skipped
numbers have the bits set at their zeroth and first
positions, i. e., they are supersets of $\SOne$.

\subsubsection{Satisfaction of R3}

Recall that the binary representation of $\rown_{\mcS}$
provides an accurate characterization of $\mcS$, except for
one sensor element. Therefore, the number of non-zero bits
in the binary representation of $\rown_{\mcS}$ is equal to
$| \mcS| -1$, since the most significant bit is excluded. 
Thus, by restricting the number of bits in the binary
representation of the enumerated row numbers, we can enforce
cardinality constraints like
\eqref{eq:nom_card} and satisfy R3.

The proposed generator provides an efficient way to
enumerate the search space and incorporate constraint
programming. Specifically, the row-wise enumeration permits
the enforcement of cardinality constraints as well as
the elimination of supersets of an infeasible sensor
combination. We use this framework for computations
involving Algorithm~\ref{algo:Enum}, including computation of
the set $\SsetSubopt$ in
Algorithm~\ref{algo:CP_opt}, and 
the set $\Ssitawarered$ in 
Algorithm~\ref{algo:CP}.

\section{Application: User-interface design for IEEE 118-Bus Power Grid}
\label{sec:app}

\begin{table*}
    \setlength\arraycolsep{1pt}
    \newcommand{\networkWidth}{2.5cm}
    \newcommand{\GsWidth}{1.2cm}
    \newcommand{\userTrust}{4cm}
    \newcommand{\userSolution}{3.5cm}
    \newcommand{\so}{}
    \newcommand{\op}{}
    \centering
    \normalsize
    \adjustbox{width=1\textwidth}{
    \begin{tabular}{|c|c|c|c|l|c|c|}
    \hline				 
    \multirow{2}{*}{\begin{minipage}{\networkWidth}\centering Network configuration \end{minipage}} & \multirow{2}{*}{\begin{minipage}{\userTrust}\centering Trust level ($\ktrust$) \end{minipage}} &  \multicolumn{4}{c|}{User-interface design} & Compute \\\cline{3-6}
                                                                                                    &
                                                                                                    &
    Approach & $| \Ssitawarered|$ & {\footnotesize
    $|\mcS_\mathrm{soln}|\leq\Delta{|\mcS^\ast|}$} & Solution &
    Time \\ \hline\hline

    \multirow{3}{*}{\begin{minipage}{\networkWidth} Normal operation $(A_1,B_1,\CtaskOne)$  $\Gamma(\StaskOne)=34$ \end{minipage}}&  High ($\ktrust = 24$) &  Alg.~\ref{algo:CP_opt} &  $1$ & \op{}   $\Delta=1$ & $\StaskOne$ & $0.23$ s \\\cline{2-7}
                                                                                                                                                      & Moderate ($\ktrust = 44$) &  Alg.~\ref{algo:CP} &  $1$ &  \so{}   $\Delta=4.09$ & $\StaskOne\cup\{1,2,3,5,9\}$ & $0.91$ s \\\cline{2-7}
                                                                                                                                                      & None ($\ktrust = 108$) &  Alg.~\ref{algo:greedy} &  -- & \so{} $\Delta=4.99$ & $\SensorSet$ & $9.62$ s \\\hline\hline

    \multirow{3}{*}{\begin{minipage}{\networkWidth} Bus $38$ down $(A_2,B_1,\CtaskTwo)$  $\Gamma(\StaskTwo)=14$ \end{minipage}}&  High ($\ktrust = 4$) &  Alg.~\ref{algo:CP_opt} &  $1$ & \op{}   $\Delta=1$ & $\StaskOne$ & $0.11$ s \\\cline{2-7}
   & Moderate ($\ktrust = 24$) &  Alg.~\ref{algo:CP} &  $1$ &  \so{}   $\Delta=3.49$ & $\StaskTwo\cup\{1,2,3,4,5\}$ & $0.49$ s \\\cline{2-7}
   & None ($\ktrust = 108$) &  Alg.~\ref{algo:greedy} & -- & \so{}   $\Delta=4.99$ & $\SensorSet$ & $9.47$ s \\\hline\hline

    \multirow{3}{*}{\begin{minipage}{\networkWidth} {Line $65,66$ down} $(A_3,B_1,\CtaskThree)$  $\Gamma(\StaskThree)=30$ \end{minipage}}&  High ($\ktrust = 20$) &  Alg.~\ref{algo:CP_opt} &  $1$ & \op{}   $\Delta=1$ & $\StaskThree$ & $0.20$ s \\\cline{2-7}
   & Moderate ($\ktrust = 40$) &  Alg.~\ref{algo:CP} &  $1$ &  \so{}   $\Delta=4.00$ & $\StaskThree\cup\{1,2,3,5,9\}$ & $0.77$ s \\\cline{2-7}
   & None ($\ktrust = 108$) &  Alg.~\ref{algo:greedy} & -- & \so{}   $\Delta=4.99$  &  $\SensorSet$ & $9.26$ s \\\hline\hline

    \multirow{5}{*}{\begin{minipage}{\networkWidth} Alternate generators down $(A_1,B_2,\CtaskFour)$  $\Gamma(\StaskFour)=52$ \end{minipage}}&  High ($\ktrust = 42$) &  Alg.~\ref{algo:CP_opt} & $2306$ & \op{}   $\Delta=1$ & $\StaskFour\setminus\{37,53\}$ & $\sim 10^5$ s \\\cline{2-7} %
                                                                                                                                                       & Moderate ($\ktrust = 62$) &  Alg.~\ref{algo:CP} & $2306$ &  \so{}   $\Delta=4.43$ &  \begin{minipage}{3cm}\footnotesize\hspace*{-0.75em} ${\StaskFour\setminus\{37,53\}\cup\{2,52\}}$\end{minipage}& $\sim 10^5$  s \\\cline{2-7} %
                                                                                                                                                       & None ($\ktrust = 108$) &  Alg.~\ref{algo:greedy} & -- & \so{}   $\Delta=4.99$  &  \begin{minipage}{3.1cm} \small $\SensorSet\setminus\{5, 7, 9, 11,15, 17,$ $19, 21, 23, 25,27, 29, 31,$ $33, 35, 37, 39, 41, 43, 45,$ $47, 49, 50, 51, 53, 54\}$ \end{minipage} & $3.23$ s \\\hline
    \end{tabular}}
   \caption{Optimal user-interface solutions and computation time for IEEE 118-bus power grid problem.}
   \label{tab:results}
\end{table*}

The IEEE 118-bus model is a power network composed of $118$
buses, $54$ synchronous machines (generators), $186$
transmission lines, $9$ transformers and $99$
loads~\cite{IEEE118}.  We use linearized swing dynamics to
describe the interconnected generator
dynamics~\cite{van2006usefulness, machowski2011power}.  As
typically done in large networks~\cite{bergen1999power,
dorfler2013kron}, we used Kron reduction to reduce the
network to a generator-only network with LTI
dynamics,
\begin{align}
    \dot{x}(t)&=A_ix(t) + B_ju(t)\label{eq:gen_dyn}.
\end{align}
Here, the state $x(t)\in \mathbb{R}^{108}$ denotes the phase
and phase rate for each of the $54$ generator buses,
and the input $u(t)\in \mathbb{R}^m$ denotes the power
injection provided at each generator bus.  
We construct the system and input matrices,
$A_i\in \mathbb{R}^{108\times108}$ and $B_j\in
\mathbb{R}^{108\times m_j}$, under four
different network configurations:
\begin{enumerate}
    \item Normal operation $(A_1,B_1)$ with
        $m_1=54$,
    \item Load bus $38$ is down $(A_2,B_1)$ with
        $m_1=54$,
    \item Line $65-66$ is down $(A_3,B_1)$ with
        $m_1=54$, and
    \item Alternate generators are down $(A_1,B_2)$ with
        $m_2=27$.
\end{enumerate}
The admittance values for the interconnections in the
reduced network were obtained using
\texttt{MATPOWER}~\cite{zimmerman2011matpower}.  We
considered all the generators to be homogenous. We chose
the moment of inertia and damping coefficients as $H=2.656\
s$ and $D=2$~\cite[Tab.  1]{demetriou2015dynamic}.

We presume the user (power grid operator) is tasked with the
maintaining the power flow to a predetermined substation
(generator bus $28$) under each of the four network configurations.
The power grid operator
therefore requires information about the power flowing from all neighboring
nodes, which can be described as nonlinear functions of the difference in phase
measurements~\cite{van2006usefulness, machowski2011power}.
Therefore, the task is defined in terms of 
the phase measurements of the generator buses that have a
direct connection to the bus $28$ in the Kron reduced
network, consisting of only generator buses.

We define $\SensorSet$ to be all phase measurements of the generators on the Kron reduced network, $\SensorSet=\{e_i: i\in \mathbb{N}_{[1,54]}\}$, where $e_i$ is a column vector of zeros with one at the $i^\mathrm{th}$ component.
For the first three configurations, we have task matrices
$\CtaskOne$, $\CtaskTwo$, and $\CtaskThree$ due to differences
in the neighbors to bus $28$.  Since the network
configuration is the same in the first and the
fourth configurations, the task matrix for the fourth configuration is 
also $\CtaskOne$.

We consider three different trust levels
$\ktrust\in\{\Gamma(\Stask)-10,
\Gamma(\Stask)+10,\Gamma(\SensorSet)\}$.  Informally, this
may be interpreted as designing the user interface under:
\begin{enumerate}
    \item[a)] {\em high trust}: normal operating conditions, in which the user trusts the automation to a high
        degree ($\ktrust=\Gamma(\Stask)-10$), 
    \item[b)] {\em moderate trust}: off-nominal operating conditions, in which the user has some distrust of the
        automation, but not excessive distrust  ($\ktrust=\Gamma(\Stask)+10$), and 
    \item[c)] {\em no trust}: extreme, off-nominal operating conditions, in which the user totally distrusts the
        automation ($\ktrust=\Gamma(\SensorSet)$).
\end{enumerate}
Our results for the four configurations, under each of the three trust levels, is 
shown in 
Table~\ref{tab:results}.

\begin{figure}[ht]
    \centering
    \newcommand{\trimValues}{300 120 250 100}
    \begin{subfigure}[b]{0.6\linewidth}
        \includegraphics[Trim=\trimValues, clip,
        width=\textwidth]{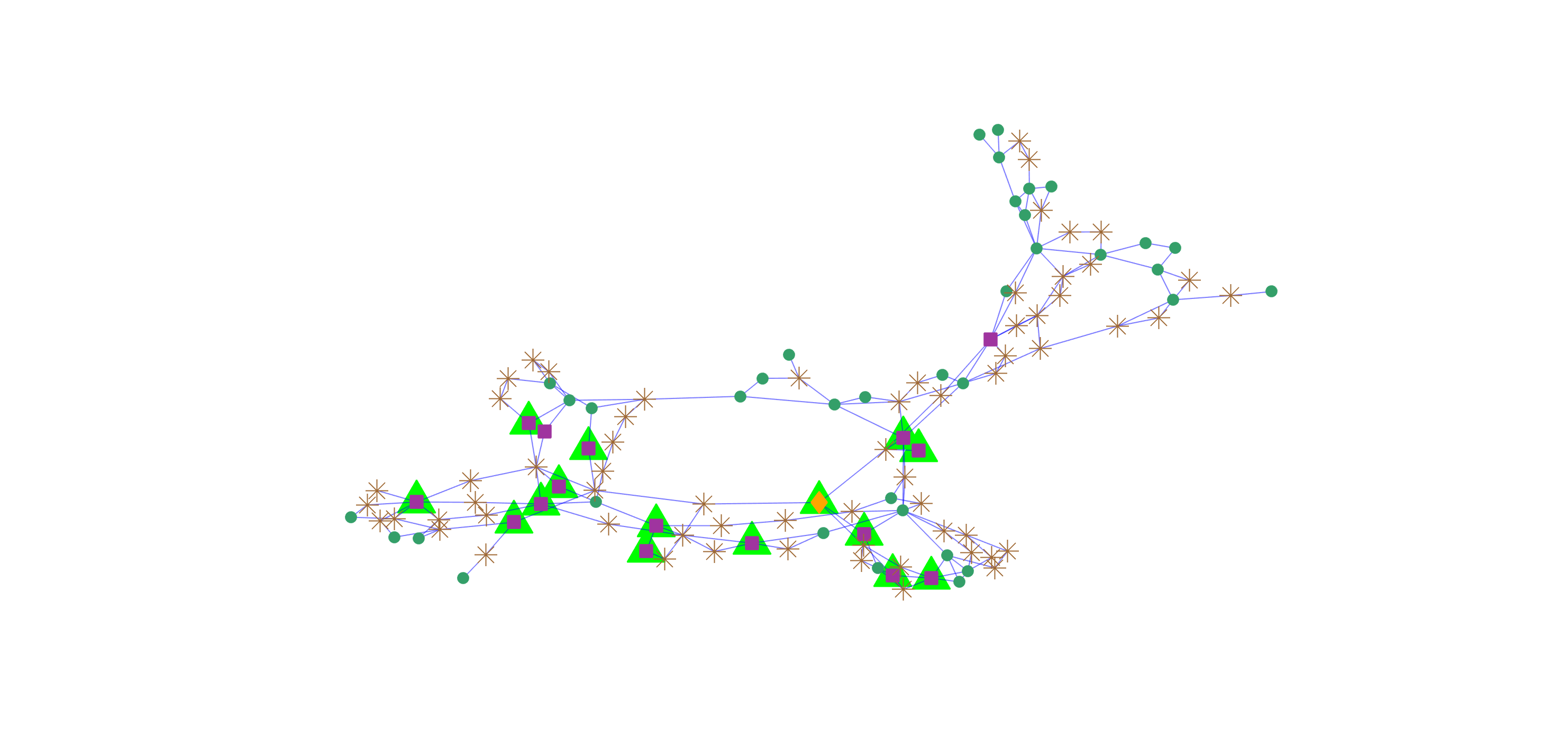}
        \caption{High trust ($\ktrust=42$)}
    \end{subfigure}
    \begin{subfigure}[b]{0.6\linewidth}
        \includegraphics[Trim=\trimValues, clip,
        width=\textwidth]{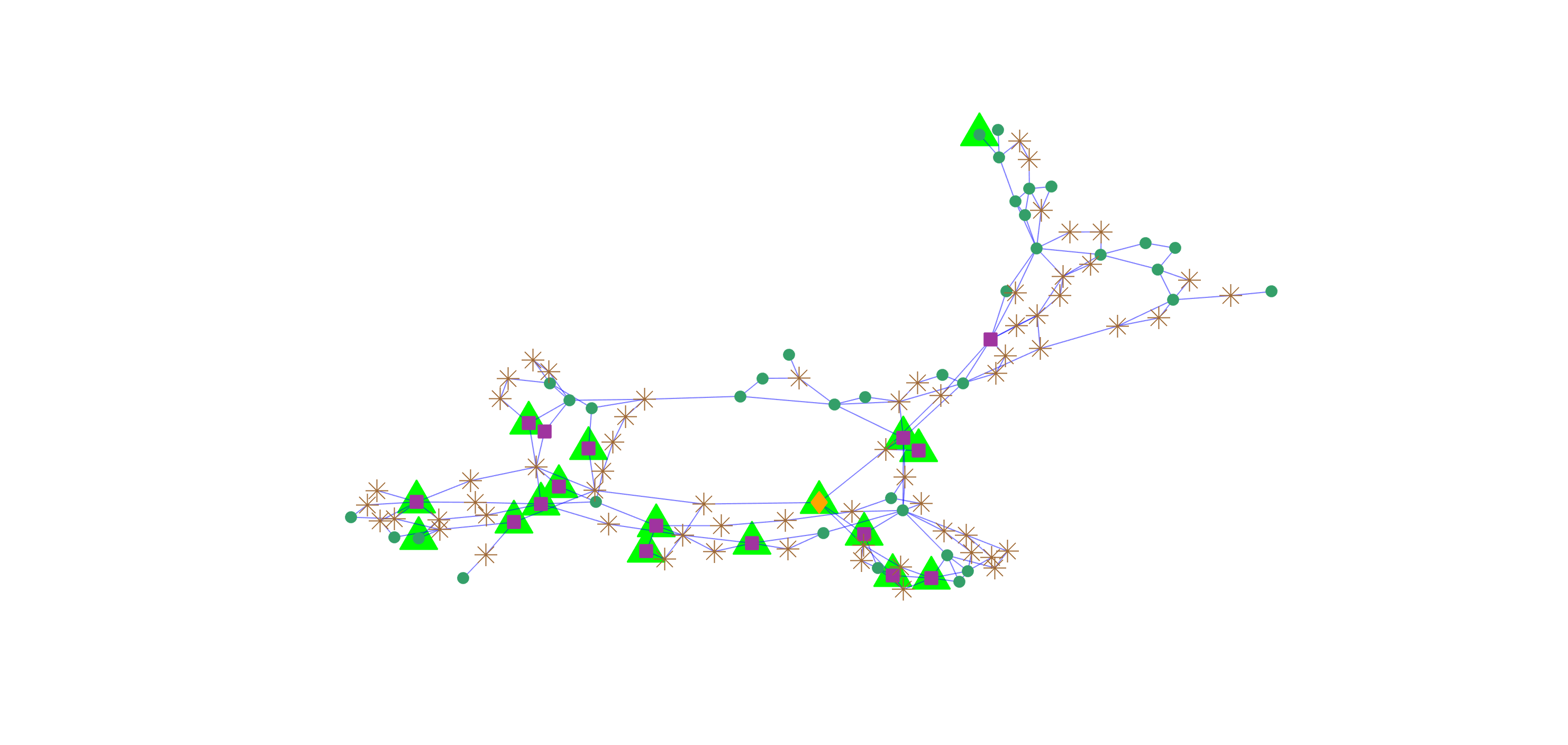}
      \caption{Moderate trust ($\ktrust=62$)}
    \end{subfigure}
    \begin{subfigure}[b]{0.6\linewidth}
        \includegraphics[Trim=\trimValues, clip,
        width=\textwidth]{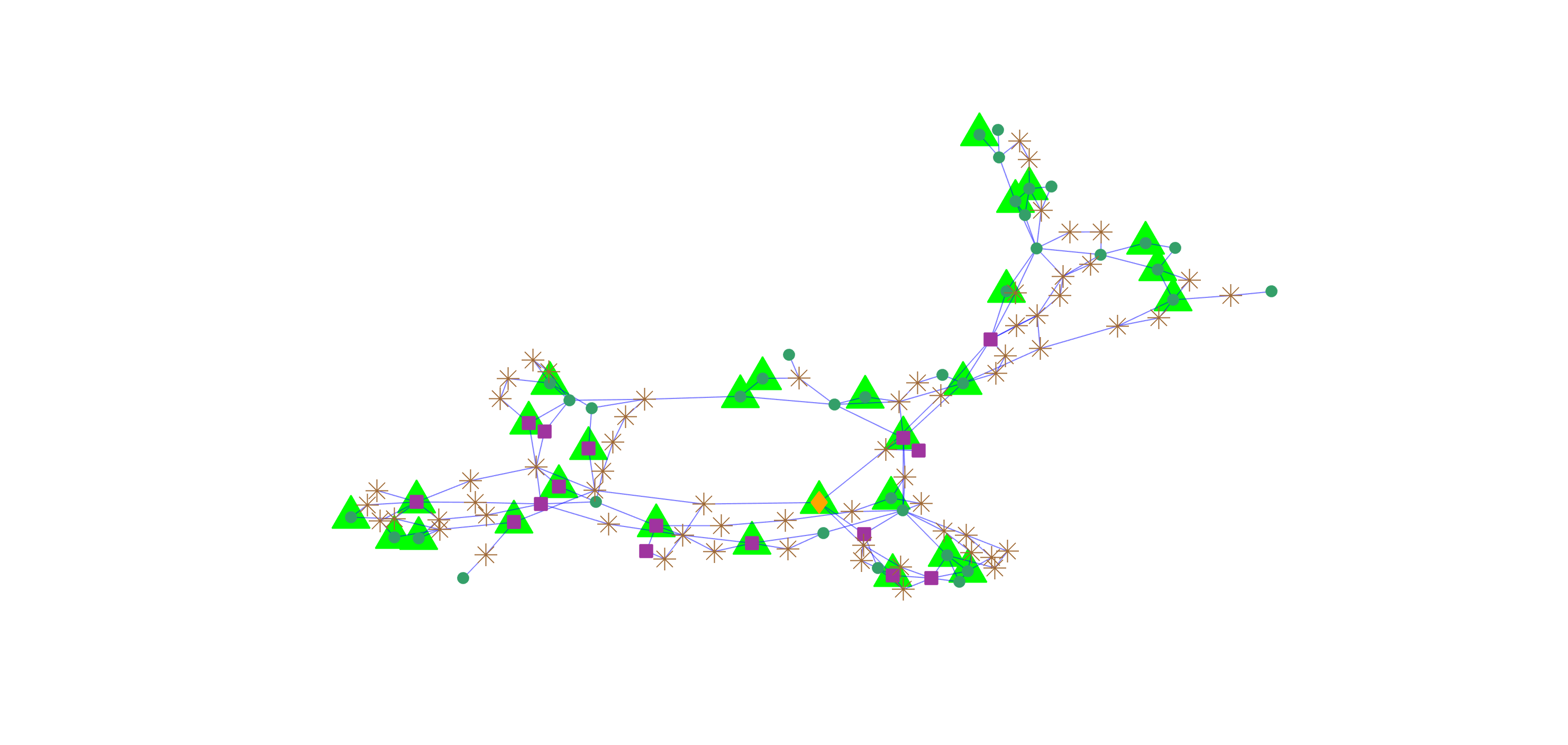}
        \caption{No trust
        ($\ktrust=118$)}
    \end{subfigure}
    \definecolor{mygeneratorbus}{RGB}{52, 159, 105}
    \definecolor{myloadbus}{RGB}{159, 105, 52}
    \definecolor{mybusofinterest}{RGB}{255, 165, 0}
    \definecolor{mytaskcolor}{RGB}{159, 52, 159}
    \definecolor{mysolutioncolor}{RGB}{0, 255, 0}
    \caption{Three interfaces are shown for the IEEE 118-bus
        under Configuration 4, in which alternate generators
        are operational.  The power grid operator's task is
        to maintain power flow at bus 28
        ({\color{mybusofinterest} $\blacklozenge$}).  The user-interface consists of
        selected generator phase angles
        ({\color{mysolutioncolor} $\blacktriangle$});
        neighbouring generators ({\color{mytaskcolor}
            $\blacksquare$}), load buses
            ({\color{myloadbus} {\Large $*$}}) and generator buses
        ({\color{mygeneratorbus} $\bullet$}) are shown for
        clarity. 
As expected, more 
generators must be monitored with lower levels of trust.}
    \label{fig:case_4}
\end{figure}

In configurations 1, 2, and 3, the relative degree $\gamma(s_i)=2$ for
every $s_i\in\SensorSet$. This means that given the phase
measurement of generator bus $i\in \mathbb{N}_{[1,54]}$, the user can
only infer the phase and the phase rate measurement of bus
$i$, but not of the other buses.  
{\em a) High trust:} Due to this decoupling,
$\StaskOne$  is the only user-interface that enables
situation awareness for configurations 1, 2, and 3 under high trust. 
In other words, we need to monitor all the buses
that are involved in the task specification.
{\em b) Moderate trust:} Additional sensors are required; 
due to the decoupling of the generator dynamics in the network, any combination of five previously unselected generators can satisfy the trust constraint.
{\em c) No trust:} Phase measurements from all the buses
would have to be displayed to attain a user information
index of $\Gamma(\SensorSet)=n=108$.  Note that even though
this is the optimal solution to \eqref{prob:UI} for $\ktrust=108$, the
conservative suboptimality bound for
Algorithm~\ref{algo:greedy} is $\Delta=4.99$.

For the fourth configuration (shown in Figure \ref{fig:case_4}), in which 
only alternate generators are operational, %
phase measurements of bus $i$ let the user infer information 
about the network
beyond bus $i$.  
Algorithm~\ref{algo:Enum} returned a non-trivial $\Ssitawarered$, with
$2,306$ elements, each of which 
could enable situation
awareness. {\em a) High trust}: 
Algorithm~\ref{algo:CP_opt} identified 
$\StaskFour\setminus\{37,53\}$ as an optimal user-interface, which in contrast to configurations 1, 2, and 3, provides sensors other than 
those associated with the task.
This interface exploits the underlying dynamics and the user's situation awareness, so that phases of the task-relevant generators can be reconstructed based on less information than would be provided merely by duplicating the sensors associated with the task. Specifically, paths between nodes 28, 37, and 53 in the network topology of the 118-bus grid, which appear in a block of the dynamics matrix in the generator swing equations, allow the user to reconstruct relevant states with fewer sensors at this level of trust.
{\em b) Moderate trust:} %
Algorithm~\ref{algo:CP} determined that two additional sensors were required. Since the user has some distrust in the automation in this case, the additional sensors reveal additional paths in the network topology that can be observed through the dynamics, which allows the user to reconstruct supplemental states and their derivatives relevant to monitoring bus 28 to meet the trust constraint. 
{\em c) No trust:} %
Algorithm~\ref{algo:greedy} yields a set of $28$ sensors that are needed to monitor power flow to bus 28. When the user fully distrusts the automation, a relatively large number of sensors are required to allows the user to reconstruct and understand states relevant to monitoring bus 28 that previously were entrusted to the automation. These sensors correspond to generators spread across the network, in order to improve observability over the entire system.

Using efficient enumeration techniques described in
Section~\ref{sec:CP}, enumerating $2^\Sred$ with $| 2^{22}| \approx
4\times10^6$ elements took only $\sim 10^5$ seconds
(about $27$ hours) to
compute.  In contrast, a naive approach using linear search over
$\PowerSet$ for the minimum cardinality set that satisfies
the constraints would require checking
$|\PowerSet|=2^{108}\approx 3\times10^{32}$ elements, 
resulting in approximately $10^{13}$ billion hours computation time (presuming each evaluation takes $10^{-4}$ seconds).
All computations were performed using MATLAB on an Intel i7-4600U CPU with 4
cores, 2.1GHz clock rate and 7.5 GB RAM.

\section{Conclusion}
\label{sec:conc}

This paper presents a method for user-interface design via sensor selection.  Unlike many UI based approaches, our method is driven by the underlying dynamics of the human-automation system, and constrained by user's situation awareness and trust in the automation.
We use submodular maximization and constraint programming to solve the sensor selection problem as a constrained combinatorial optimization, and exploit submodularity and monotonicity properties to 
identify optimal or sub-optimal solutions.
We applied our approach to a large human-automation system, consisting of a power grid operator for the IEEE 118-bus model, and constructed correct-by-design interfaces for a variety of trust levels and operating scenarios.

\section*{Acknowledgements}

The authors thank Hasan Poonawala for
his input in the formulation of the combinatorial
optimization problem \eqref{prob:feas_soneset}.


\appendix

\subsection{Submodularity in combinatorial optimization problems}
\label{app:sub_max}

Let $\SensorSet$ denote a finite set.  Consider the
following combinatorial optimization problem with a
submodular, monotone increasing set function $f: \PowerSet
\rightarrow \mathbb{N}_{[0, f(\SensorSet)]}$,
\begin{align}
    \begin{array}{rl}
        \underset{\mcS\in\PowerSet}{\mathrm{minimize}}& | \mcS| \\
        \mathrm{subject\ to}& f(\mcS)\geq k\\
    \end{array}\label{prob:subm_max}
\end{align}
for some problem parameter $k\in\mathbb{N}_{[1, f(\SensorSet)]}$.

\begin{lem}[\textbf{Suboptimality bound for the greedy
    solution to
\eqref{prob:subm_max}~\cite{clark_submodularity_2017,wolsey1982analysis}}]\label{lem:greedy}
    Submodular maximization problem \eqref{prob:subm_max} admits a $ \mathcal{O}({| \SensorSet|}^2)$ greedy algorithm (Algorithm~\ref{algo:greedy}) such that its solution $\Sgreedy$ satisfies the property
    \begin{align}
        1\leq \frac{| \Sgreedy|}{| \Sopt |}&\leq 1+\log\left(\frac{f(\SensorSet) -
        f(\emptyset)}{f(\Sgreedy)-f(\SgreedyMinus)}\right)\label{eq:subopt_bound}
    \end{align}
    with $\SgreedyMinus$ is the solution at the iteration
    prior to termination of Algorithm~\ref{algo:greedy}. 
\end{lem}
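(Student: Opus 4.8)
The lower bound $|\Sopt| \le |\Sgreedy|$ is immediate, since $\Sopt$ is by definition a minimum-cardinality feasible set for \eqref{prob:subm_max} while $\Sgreedy$ is merely feasible. The substance is the upper bound, which is the classical worst-case analysis of the greedy algorithm for submodular set covering; the plan is to reproduce the argument of Wolsey~\cite{wolsey1982analysis} (see also~\cite{clark_submodularity_2017}) specialized to $f$. First I would record that, because $f$ is integer-valued and monotone with $f(\SensorSet) \ge k$, the greedy rule strictly increases $f$ at every step until the constraint is met, so it terminates after finitely many iterations with a well-defined terminating increment.

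Write the greedy iterates as $\emptyset = \mcS_0 \subsetneq \mcS_1 \subsetneq \cdots \subsetneq \mcS_T = \Sgreedy$, where each $\mcS_i$ adjoins to $\mcS_{i-1}$ the element of largest marginal gain and $T$ is the first index with $f(\mcS_T) \ge k$; thus $\SgreedyMinus = \mcS_{T-1}$ and $|\Sgreedy| = T$. The key step is a single per-iteration progress inequality derived from submodularity and monotonicity. Fixing an optimal $\Sopt = \{o_1,\dots,o_L\}$ with $L = |\Sopt|$, submodularity gives $f(\Sopt \cup \mcS_{i-1}) - f(\mcS_{i-1}) \le \sum_{j=1}^{L}\big(f(\mcS_{i-1}\cup\{o_j\}) - f(\mcS_{i-1})\big)$; the greedy choice bounds each summand by the realized increment $f(\mcS_i)-f(\mcS_{i-1})$; and monotonicity gives $f(\Sopt\cup\mcS_{i-1}) \ge f(\Sopt) \ge k$. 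Chaining these yields
\begin{align}
    k - f(\mcS_{i-1}) \le L\,\big(f(\mcS_i) - f(\mcS_{i-1})\big). \nonumber
\end{align}

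Setting the residual $r_i = k - f(\mcS_i)$, this inequality reads $r_i \le (1-1/L)\,r_{i-1} \le e^{-1/L} r_{i-1}$, so the residual decays geometrically and $r_{T-1} \le (k - f(\emptyset))\,e^{-(T-1)/L}$. Taking logarithms bounds the number of full iterations as $T-1 \le L\,\log\!\big((k-f(\emptyset))/r_{T-1}\big)$, and dividing by $L$ and adding the single terminating step produces the $1 + \log(\cdot)$ form of \eqref{eq:subopt_bound}. The denominator is the final marginal gain $f(\Sgreedy)-f(\SgreedyMinus)$, which coincides with $r_{T-1}$ when the terminating step does not overshoot $k$ (in particular for the full-coverage instance $k=f(\SensorSet)$ used in \eqref{prob:UI_subm}), and the numerator may be relaxed from $k-f(\emptyset)$ to $f(\SensorSet)-f(\emptyset)$ since $k \le f(\SensorSet)$.

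I expect the main obstacle to be precisely this endpoint bookkeeping — establishing the progress inequality cleanly and then accounting for the last iteration so that the terminating marginal gain appears in the denominator — rather than the geometric-decay estimate itself, which is routine once the per-step inequality is in hand. The integrality of $f$ is what guarantees termination and keeps $r_{T-1}$ strictly positive, so that the logarithm is finite and the stated ratio is well defined.
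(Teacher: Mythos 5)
The paper itself contains no proof of Lemma~\ref{lem:greedy}: the bound is imported by citation from Wolsey and Clark et al., so your attempt can only be judged against the classical covering analysis---which is exactly the route you take. Most of it is correct: the trivial lower bound, the per-iteration inequality $k - f(\mcS_{i-1}) \le L\bigl(f(\mcS_i) - f(\mcS_{i-1})\bigr)$ obtained from submodularity, monotonicity and the greedy rule, the geometric decay of the residual $r_i = k - f(\mcS_i)$, and the use of integrality to guarantee termination and $r_{T-1} \ge 1$. This chain legitimately yields $|\Sgreedy|/|\Sopt| \le 1 + \log\bigl((k - f(\emptyset))/(k - f(\SgreedyMinus))\bigr)$, and relaxing the numerator to $f(\SensorSet) - f(\emptyset)$ is harmless.

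The gap is exactly the endpoint step you flag and then leave open. The lemma's denominator is the final marginal gain $f(\Sgreedy) - f(\SgreedyMinus)$, which satisfies $f(\Sgreedy) - f(\SgreedyMinus) \ge k - f(\SgreedyMinus)$, with equality only when the terminating step does not overshoot $k$. Substituting the larger quantity into the denominator makes the right-hand side of \eqref{eq:subopt_bound} \emph{smaller}, so it is not a relaxation of what you proved and does not follow from your inequalities. Your proof therefore covers only the no-overshoot case (e.g., $k = \Gamma(\SensorSet)$ in \eqref{prob:UI_subm}), whereas the lemma is stated for every $k \in \mathbb{N}_{[1, f(\SensorSet)]}$ and the paper invokes it precisely in the overshoot-prone regime: in \eqref{prob:UI_Q}, used in line~4 of Algorithm~\ref{algo:CP} and in Proposition~\ref{prop:algo2_bound}, the threshold is $\ktrust < \Gamma(\SensorSet)$ and the integer-valued increments of $\Gamma$ can jump past it. The statement is nevertheless true in general, and the missing ingredient is that greedy gains are non-increasing by submodularity, $\Delta_1 \ge \Delta_2 \ge \cdots \ge \Delta_T$, so a large final gain forces a short run: iterations with residual exceeding $L\Delta_T$ number at most $1 + L\log\bigl((f(\SensorSet)-f(\emptyset))/(L\Delta_T)\bigr)$ by your decay estimate, while the remaining iterations number at most $L$ since each reduces the residual by at least $\Delta_T$; summing (and treating $L=1$ separately, where greedy terminates in one step) gives \eqref{eq:subopt_bound}. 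Alternatively, run your whole argument on the truncated function $\min\{f,k\}$, for which the residual and the final gain coincide. Two smaller points: the $\mathcal{O}(|\SensorSet|^2)$ complexity claim in the lemma goes unaddressed (it is immediate, since there are at most $|\SensorSet|$ iterations each scanning at most $|\SensorSet|$ candidates), and, in fairness, for the paper's downstream use your weaker bound would have sufficed, because Proposition~\ref{prop:algo2_bound} immediately relaxes \eqref{eq:subopt_bound} to $\Delta_\Gamma(\ktrust,\cdot)$, whose denominator is exactly the residual $\ktrust - \Gamma(\SgreedyMinus)$ that your argument produces.
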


\begin{algorithm}[t]
    \caption{Greedy algorithm to solve \eqref{prob:subm_max}}\label{algo:greedy}
    \begin{algorithmic}[1]
        \Require~Submodular monotone increasing function $f(\cdot)$, power set $\PowerSet$, submodular function lower bound $k$
        \Ensure~Optimal greedy solution $\Sgreedy$ and the solution prior to termination step $\SgreedyMinus$ (see \eqref{eq:subopt_bound})
        \Procedure{GreedyAlgorithm}{}%
            \State $\Sgreedy\gets\emptyset$
            \While{$f(\Sgreedy)< k$}
                \State $s^\ast\gets \underset{s\in \SensorSet\setminus\Sgreedy}{\operatorname{argmax}} f(\Sgreedy \cup \{s\})-f(\Sgreedy)$ %
                \State $\SgreedyMinus \gets \Sgreedy$, $\Sgreedy \gets \Sgreedy \cup \{s^\ast\}$ %
            \EndWhile
            \State \Return $\Sgreedy,\SgreedyMinus$
      \EndProcedure
  \end{algorithmic}
\end{algorithm}

Algorithm~\ref{algo:greedy} is a greedy approach to solve the submodular optimization problem
\eqref{prob:subm_max} with provable worst-case suboptimality
bounds (Lemma~\ref{lem:greedy}).  The suboptimality bound
given by
Lemma~\ref{lem:greedy} is the best bound available by any polynomial-time
algorithm~\cite{feige1998threshold}, assuming $P\neq NP$.  The bound in
\eqref{eq:subopt_bound} is a worst-case bound; Algorithm~\ref{algo:greedy} often
performs significantly better in practice~\cite{clark_submodularity_2017}.

\subsection{Proof of Lemma~\ref{lem:prop_TS_gamma}}
\label{app:lem_prop_TS_gamma}
\proofLemPropTSGamma{}
\qed

\subsection{Proof of Thm.~\ref{thm:Ssitaware_decompose}}
\label{app:thm_ssitware}
\proofThmSsitaware{}
\qed

\subsection{Proof of Lemma~\ref{lem:Stask_feas}}
\label{app:lem_Stask_feas}
\proofLemStaskFeas{}
\qed

\subsection{Proof of
Corollary~\ref{corr:lower_bound_mcS}}
\proofCorrLowerBoundMcS{}
\qed

\subsection{Proof of Corollary~\ref{corr:Strust}}
\proofCorrStrust{}
\qed

\end{document}